\documentclass[sigconf]{aamas} 

\usepackage{balance} 

\usepackage{graphicx}
\usepackage{amsmath}
\usepackage{amsthm}
\usepackage{booktabs}
\usepackage{algorithm}
\usepackage[noend]{algorithmic}
\usepackage{tikz}
\usepackage{mathtools}
\usepackage{centernot}
\usepackage{enumerate}
\usepackage{cleveref}
\usepackage{subfig}
\usepackage{todonotes}

\usepackage{booktabs}
\usepackage{multirow}
\usepackage{makecell}

\usetikzlibrary{arrows}

\usetikzlibrary{patterns}

\makeatletter
\DeclareRobustCommand*\cal{\@fontswitch\relax\mathcal}
\makeatother
\def\hb{\hbox to 10.7 cm{}}
\newtheoremstyle{break}
  {\topsep}{\topsep}%
  {\itshape}{}%
  {\bfseries}{}%
  {\newline}{}%
\theoremstyle{break}
\usepackage[inline, shortlabels]{enumitem}  
\setlist{leftmargin=*}

\newtheorem{definition}{Definition}
\newtheorem{principle}{Principle}
\newtheorem{example}{Example}

\usepackage{thmtools}
\usepackage{thm-restate}

\newtheorem{theorem}{Theorem}[section]
\newtheorem{proposition}[theorem]{Proposition}

\usepackage[normalem]{ulem}

\makeatletter
\renewcommand{\p@subfigure}{}
\makeatother

\def\QBAG{\mbox{\ensuremath{(\Args, \is, \att, \supp)}}}

\def\Args{\ensuremath{\mathit{Args}}} 
\def\Att{\ensuremath{\mathit{Att}}}
\def\Supp{\ensuremath{\mathit{Supp}}}
\def\is{\ensuremath{\tau}}
\def\fs{\ensuremath{\sigma}}
\def\att{\Att}
\def\supp{\Supp}
\def\ddom{\ensuremath{\mathrm{ddom}}}

\def\graph{\ensuremath{\mathcal{G}}} 
 
\def\interval{\ensuremath{\mathbb{I}}}

\def\arga{\ensuremath{\mathsf{a}}}
\def\argb{\ensuremath{\mathsf{b}}}
\def\argc{\ensuremath{\mathsf{c}}}
\def\argd{\ensuremath{\mathsf{d}}}
\def\arge{\ensuremath{\mathsf{e}}}

\def\argx{\ensuremath{\mathsf{x}}}
\def\argy{\ensuremath{\mathsf{y}}}
\def\argz{\ensuremath{\mathsf{z}}}

\newcommand{\argnode}[3]{\mbox{\ensuremath{#1~(#2)\!:\!\mathbf{#3}}}}
\newcommand{\pargnode}[1]{\mbox{\ensuremath{#1~\phantom{(e)\!:\!\mathbf{e}}}}}

\tikzset{
noanode/.style={scale=0.55,dashed, circle, draw=black!60, minimum size=10mm, font=\bfseries},
unanode/.style={scale=0.55,circle, draw=black!75, minimum size=10mm, font=\bfseries},
xunanode/.style={scale=0.55,circle, draw=black!75, minimum size=10mm, font=\bfseries, line width=2.5pt},
invnode/.style={scale=0.55,circle, draw=white!0, minimum size=0mm, font=\bfseries},
anode/.style={scale=0.55,circle, draw=lightgray!60, minimum size=10mm, font=\bfseries},
xanode/.style={scale=0.55, circle, fill=lightgray, draw, minimum size=10mm, font=\bfseries,line width=2.5pt},
xnoanode/.style={scale=0.55, circle, dashed, draw, minimum size=10mm, font=\bfseries,line width=2.5pt},
sxnoanode/.style={scale=0.55, circle, dotted, draw, minimum size=10mm, font=\bfseries, line width=0.85pt}
}

\setcopyright{ifaamas}
\acmConference[AAMAS '26]{Proc.\@ of the 25th International Conference
on Autonomous Agents and Multiagent Systems (AAMAS 2026)}{May 25 -- 29, 2026}
{Paphos, Cyprus}{C.~Amato, L.~Dennis, V.~Mascardi, J.~Thangarajah (eds.)}
\copyrightyear{2026}
\acmYear{2026}
\acmDOI{}
\acmPrice{}
\acmISBN{}

\acmSubmissionID{762}

\title[Strength Change Explanations in Quantitative Argumentation]{Strength Change Explanations in Quantitative Argumentation} 

\author{Timotheus Kampik}
\affiliation{
  \institution{Umeå University}
  \city{Umeå}
  \country{Sweden}}
\email{tkampik@cs.umu.se}

\author{Xiang Yin}
\affiliation{
  \institution{Imperial College London}
  \city{London}
  \country{United Kingdom}}
\email{x.yin20@imperial.ac.uk}

\author{Nico Potyka}
\affiliation{
  \institution{Cardiff University}
  \city{Cardiff}
  \country{United Kingdom}}
\email{PotykaN@cardiff.ac.uk}

\author{Francesca Toni}
\affiliation{
  \institution{Imperial College London}
  \city{London}
  \country{United Kingdom}}
\email{f.toni@imperial.ac.uk}

\begin{abstract}
In order to make argumentation-based inference contestable, it is crucial to explain what changes can achieve a desired (instead of the contested) inference result.
To this end, we introduce \emph{strength change explanations} for quantitative (bipolar) argumentation graphs.
Strength change explanations describe changes to the initial strengths of a subset of the arguments in a given graph that can achieve a desired ordering based on the final strengths of some (potentially different) subset of arguments.
We show that the existing notions of \emph{inverse} and \emph{counterfactual} problems can be reduced to strength change explanations.
We also prove basic soundness and completeness properties of our strength change explanations, and demonstrate their existence and non-existence in some special cases.
By applying a heuristic search, we demonstrate that we can often successfully find strength change explanations for layered graphs that are common in typical application scenarios; still, limitations remain for settings where we do not provide guarantees for the presence (or absence) of explanations.
\end{abstract}

\keywords{Formal Argumentation, Explainable AI, Contestability}

\newcommand{\BibTeX}{\rm B\kern-.05em{\sc i\kern-.025em b}\kern-.08em\TeX}

\begin{document}


\pagestyle{fancy}
\fancyhead{}


\maketitle 


\section{Introduction}
\label{sec:intro}
In order to facilitate human-centricity, applications of Artificial Intelligence (AI)
need to be \emph{contestable}: not only must machines explain the results of their decision-making processes to humans; in addition, humans must be able to challenge these results~\cite{10.1145/3449180}.
Computational argumentation, in which inferences are drawn from potentially dynamic graphs modelling arguments (nodes), as well as attack and support relationships between them (edges), may have the potential to be a key enabler of contestable AI~\cite{DBLP:journals/corr/abs-2405-10729,DBLP:journals/frai/DietzKM22}.

One way to achieve contestability is to enable machines to provide, given a decision outcome, explanations in the sense of sets of required changes that lead to a more desirable outcome~\cite{guidotti2024counterfactual,Stepin.et.al:2021}.
We define such explanations for Quantitative Bipolar Argumentation Graphs (QBAGs), graphs with weighted nodes and directed edges representing two binary relations modelling \emph{attack} and \emph{support}, respectively.
\emph{Gradual semantics} then draw inferences from QBAGs by updating the weights from \emph{initial strengths} to \emph{final strengths}, given the graph topology of the QBAG.
QBAGs play an important role in argumentative eXplainable AI (XAI), a line of research that aims to advance the study and application of computational argumentation in the broader 
XAI context~\cite{Cyras.et.al:2021-IJCAI}.
A range of works showcases the application of QBAGs to XAI use cases, such as explainable image recognition~\cite{ayoobi2024argumentative} and recommendation systems~\cite{RAGO2021103506}.
To facilitate contestability in QBAGs, we study which changes to the initial strengths of a subset of the arguments in a QBAG can yield a desired outcome in terms of the ordering arising from the final strengths of another subset of the QBAG's arguments.

For example, assume the arguments $\arga$ and $\arge$ that model variables of a credit application decision in an intermediate layer of a layered QBAG; $\argd$ models a variable influencing both $\arga$ and $\arge$. Finally, $\argb$ and $\argc$ in the output layer model the acceptance of the application only if the final strength of $\argc$ is greater than the final strength of $\argb$. This ordering can potentially be affected by changes to the initial strengths of $\arga$ and $\arge$.
We may want to identify such initial strength changes that specifically affect a change of the final strength ordering $\argb \succ \argc$ ($\argb$'s final strength is greater than $\argc$'s, \emph{application rejected}) to $\argc \succ \argb$ (\emph{application accepted}), cf. Figure~\ref{fig:intro}.

We can identify such changes, which we call \emph{Strength change eXplanations} (SXs), by generalizing the so-called \emph{inverse problem} in quantitative argumentation, which describes the assignment of initial strengths to all arguments in a QBAG such that a desired final strength ordering of these arguments is achieved~\cite{DBLP:conf/ijcai/OrenYVB22}.

Before we commence the formal part of this paper, let us expand on the colloquial intuition of an SX. An SX depends on a QBAG, a gradual semantics, and a subset of the QBAG's arguments, which we call \emph{mutable arguments}. It defines a set of (mutable argument, initial strength)-tuples that, if applied to the QBAG, yield a specific desired ordering given by the final strengths of (some of) the arguments in the QBAG.
Roughly, we say that an SX is \emph{$\epsilon$-approximate} if the desired ordering cannot be achieved by a substantially better SX 
in terms of the sum of all changes to arguments' initial strengths (a smaller sum is better).
Intuitively, a \emph{$0$-approximate} SX is \emph{optimal}, while a larger $\epsilon$ indicates weaker optimality guarantees.
Below, we give an example of SXs, applying a simplistic gradual semantics that (given an acyclic QBAG) traverses the graph in topological order and, given an argument, subtracts the final strengths of all attackers from the argument's initial strength, while adding the final strengths of all supporters\footnote{We use this semantics so that readers can easily verify the examples; our results do not depend on it.}.

\begin{example}
\label{ex:intro}
Consider the QBAG in Figure~\ref{fig:intro:g}. Nodes in the graphs are arguments, $\argnode{\argx}{i}{f}$ represents argument $\argx$ with initial strength $\is(\argx) = i$ and final strength $\fs(\argx) = \mathbf{f}$, and edges labelled $+$ and $-$ represent support and attack, respectively. The final strength of $\argb$ is greater than the final strength of $\argc$: $\fs(\argb) > \fs(\argc)$. We want to find changes to the initial strengths of the arguments $\arga$ and $\arge$ that yield  $\fs(\argb) < \fs(\argc)$.
Such changes are applied in Figures~\ref{fig:intro:gprime}, \ref{fig:intro:gprimeprime}, and \ref{fig:intro:gstar}. The changes applied in Figures~\ref{fig:intro:gprime} and \ref{fig:intro:gprimeprime} are $\epsilon$-approximate, given $\epsilon = 1$ (i.e., technically any $\epsilon \geq 1$ would work as well).
Clearly, the optimal way of achieving the desired ordering is increasing the initial strength of $\arga$ by marginally more than $1$.
As the changes applied in $\graph'$ and $\graph''$ are $|\tau_{\graph'}(\arga) - \tau_{\graph}(\arga)| + |\tau_{\graph'}(\arge) - \tau_{\graph}(\arge)| = 2$ and $|\tau_{\graph''}(\arga) - \tau_{\graph}(\arga)| = 2$, respectively, the changes are still within the approximation ``wiggle room'' of $1$.
The changes applied to $\graph^*$ are not $\epsilon$-approximate given $\epsilon = 1$: we have increased the initial strength of $\arga$ by $1$ and of $\arge$ by $2$, but we could have increased the initial strength of $\arge$ by $< 2 - 1$ (e.g., by just $0.5$) and still achieve the desired ordering.
\begin{figure}[!ht]
    \subfloat[$\graph$]{
    \label{fig:intro:g}
    \begin{tikzpicture}[scale=0.55]
        \node[unanode]    (a)    at(0,2)  {\argnode{\arga}{1}{2}};
        \node[unanode]  (b)    at(0,0)  {\argnode{\argb}{8}{6}};
        \node[unanode]    (c)    at(2,0)  {\argnode{\argc}{1}{4}};
        \node[unanode]    (d)    at(1,4)  {\argnode{\argd}{1}{1}};
        \node[unanode]    (e)    at(3,2)  {\argnode{\arge}{2}{1}};
        
        \path [->, line width=0.5mm]  (a) edge node[left] {-} (b);
        \path [->, line width=0.5mm]  (a) edge node[above] {+} (c);
        \path [->, line width=0.5mm]  (e) edge node[right] {+} (c);
        \path [->, line width=0.5mm]  (d) edge node[left] {+} (a);
        \path [->, line width=0.5mm]  (d) edge node[right] {-} (e);
    \end{tikzpicture}
    }
    \hspace{20pt}
    \centering
    \subfloat[$\graph'$]{
    \label{fig:intro:gprime}
    \begin{tikzpicture}[scale=0.55]
        \node[xnoanode]    (a)    at(0,2)  {\argnode{\arga}{2}{3}};
        \node[unanode]  (b)    at(0,0)  {\argnode{\argb}{8}{5}};
        \node[unanode]    (c)    at(2,0)  {\argnode{\argc}{1}{6}};
        \node[unanode]    (d)    at(1,4)  {\argnode{\argd}{1}{1}};
        \node[xnoanode]    (e)    at(3,2)  {\argnode{\arge}{3}{2}};
        
        \path [->, line width=0.5mm]  (a) edge node[left] {-} (b);
        \path [->, line width=0.5mm]  (a) edge node[above] {+} (c);
        \path [->, line width=0.5mm]  (e) edge node[right] {+} (c);
        \path [->, line width=0.5mm]  (d) edge node[left] {+} (a);
        \path [->, line width=0.5mm]  (d) edge node[right] {-} (e);
    \end{tikzpicture}
    } \\
    \centering
    \subfloat[$\graph''$]{
    \label{fig:intro:gprimeprime}
    \begin{tikzpicture}[scale=0.55]
        \node[xnoanode]    (a)    at(0,2)  {\argnode{\arga}{3}{4}};
        \node[unanode]  (b)    at(0,0)  {\argnode{\argb}{8}{4}};
        \node[unanode]    (c)    at(2,0)  {\argnode{\argc}{1}{6}};
        \node[unanode]    (d)    at(1,4)  {\argnode{\argd}{1}{1}};
        \node[unanode]    (e)    at(3,2)  {\argnode{\arge}{2}{1}};
        
        \path [->, line width=0.5mm]  (a) edge node[left] {-} (b);
        \path [->, line width=0.5mm]  (a) edge node[above] {+} (c);
        \path [->, line width=0.5mm]  (e) edge node[right] {+} (c);
        \path [->, line width=0.5mm]  (d) edge node[left] {+} (a);
        \path [->, line width=0.5mm]  (d) edge node[right] {-} (e);
    \end{tikzpicture}
    }
    \hspace{20pt}
    \centering
    \subfloat[$\graph^*$]{
    \label{fig:intro:gstar}
    \begin{tikzpicture}[scale=0.55]
        \node[xunanode]    (a)    at(0,2)  {\argnode{\arga}{2}{3}};
        \node[unanode]  (b)    at(0,0)  {\argnode{\argb}{8}{5}};
        \node[unanode]    (c)    at(2,0)  {\argnode{\argc}{1}{7}};
        \node[unanode]    (d)    at(1,4)  {\argnode{\argd}{1}{1}};
        \node[xunanode]    (e)    at(3,2)  {\argnode{\arge}{4}{3}};
        
        \path [->, line width=0.5mm]  (a) edge node[left] {-} (b);
        \path [->, line width=0.5mm]  (a) edge node[above] {+} (c);
        \path [->, line width=0.5mm]  (e) edge node[right] {+} (c);
        \path [->, line width=0.5mm]  (d) edge node[left] {+} (a);
        \path [->, line width=0.5mm]  (d) edge node[right] {-} (e);
    \end{tikzpicture}
    } 
\caption{QBAG $\graph$ and its updates $\graph'$, $\graph''$, and $\graph^*$. Here and henceforth, a node labelled $\argnode{\argx}{i}{f}$ represents argument $\argx$ with initial strength $\is(\argx) = i$ and final strength $\fs(\argx) = \mathbf{f}$. Edges labelled $+$ and $-$ represent support and attack, respectively. Arguments with bold borders are strength change explanation arguments, given the desired ordering $\langle \argc, \argb \rangle$ and the \emph{mutable set} $\{\arga, \arge\}$; arguments with bold dashed borders make up \emph{$1$-approximate} strength change explanations.}
\label{fig:intro}
\end{figure}
\end{example}
Taking the idea sketched above as a starting point, the contributions of this paper are the following:
\begin{enumerate*}[label=(\roman*)]
\item We provide a formal framework for SXs (Section~\ref{sec:explanations});
\item We analyse basic properties of \emph{optimal} SXs (Section~\ref{sec:optimality});
\item We demonstrate existence and non-existence guarantees of SXs for some cases (Section~\ref{sec:approximating});
\item We implement a heuristics-based search for identifying strength change explanations, drawing from what we have learned from the theoretical analysis and empirically demonstrate the feasibility of finding SXs in small layered QBAGs, as well as some limitations (Section~\ref{sec:application});
\item Finally, we formally establish the connection between SXs related approaches in the literature (Section~\ref{sec:characterizations}).
\end{enumerate*}

\section{Related Work}
\label{sec:related}
The work we present in this paper extends the line of research on argumentative XAI~\cite{Cyras.et.al:2021-IJCAI,vassiliades_bassiliades_patkos_2021} and, more specifically, on the study of explaining inferences drawn from argumentation frameworks---QBAGs, in our case.
QBAG explainability has recently been studied in a range of works~\cite{KAMPIK2023109066,DBLP:journals/ijar/KampikPYCT24,yin2024qarg,DBLP:journals/corr/abs-2407-08302}.
Most relevant in our context are studies that define and analyse
\begin{enumerate*}[label=(\roman*)]
    \item (any, not only initial strength) changes to QBAGs that can explain changes in the final strength-based ordering of two arguments~\cite{KAMPIK2023109066};
    \item initial strength changes to the arguments in a QBAG that can affect the change of a specific argument's final strength in a desirable way, i.e., so that a desired final strength is achieved or a final strength threshold is exceeded~\cite{yin2024qarg}.
\end{enumerate*}
Our research differs from \emph{(i)} in that we explain how to achieve a \emph{counterfactual}, desired ordering that requires searching for a corresponding graph (changing only initial strengths); in contrast, \cite{KAMPIK2023109066} defines explanations as subsets of changed arguments in a graph, given \emph{actual} (factual) changes.
In comparison to \emph{(ii)}, the (strong) counterfactual problem defined in~\cite{yin2024qarg} can be reduced to a special case of the SXs we introduce in our work.
Finally, our work is conceptually, but not formally, related to the idea of a \emph{semifactual} explanation that, in contrast to a counterfactual explanation, describes a somewhat ``maximal'' change to a graph that does not affect the outcome of an inference (either in general or to the extent required), as first introduced to argumentative XAI in~\cite{KR2024-2}.

Beyond explainability, our work adds
to the study of \emph{enforcement}, investigating how desired outcomes can be achieved (``enforced'') in different variants of computational argumentation~\cite{doutre-argument}.
Here, the most relevant work introduces the \emph{inverse problem} that
describes the assignment of initial strengths to a gradual argumentation framework such that a desirable outcome in terms of a final strength-based ordering of the arguments is achieved~\cite{DBLP:conf/ijcai/OrenYVB22}.
Our work extends the inverse problem to generate explanations, in order to support bipolar (instead of attack-only) argumentation graphs, and to allow for constraining the set of arguments whose initial strengths can be changed and whose final strengths are of interest.
We provide a more formal integration with~\cite{yin2024qarg,DBLP:conf/ijcai/OrenYVB22} by means of an analysis presented in Section~\ref{sec:characterizations}.

\section{Preliminaries}
\label{sec:preliminaries}
%
A \emph{Quantitative Bipolar Argumentation Graph (QBAG)}~\cite{Potyka:2019,Baroni:Rago:Toni:2019} is a tuple
$\graph = \QBAG$ consisting of  a set of arguments 
$\Args$, an \emph{attack} relation 
$\Att \subseteq \Args \times \Args$, a \emph{support} relation 
$\Supp \subseteq \Args \times \Args$
such that $\Att \cap \Supp = \emptyset$,
and an \emph{initial strength} function 
$\is : \Args \to \interval$.
%
We denote the class of QBAGs by $\cal Q$.
Given a QBAG $\graph = \QBAG$, 
we denote by
${\cal R}_\graph^-(\argx)$ and ${\cal R}_\graph^+(\argx)$ the sets $\{\argy\mid\argy \in \Args, (\argy, \argx) \in \Att\}$ and $\{\argy\mid\argy \in \Args, (\argy, \argx) \in \Supp\}$, respectively, which we call the attackers/supporters of $\argx$. 
We drop the subscript $\graph$ where the context is clear.

For $\argx, \argy \in \Args$, 
we say that ``$\argx$ can reach $\argy$'' iff there is a directed path from $\argx$ to $\argy$ in $(\Args, \Att \cup \Supp)$; 
for $S, S' \subseteq \Args$, we say that ``$S$ can reach $S'$'' iff $\exists \argx \in S, \argy \in S'$ such that $\argx$ can reach $\argy$; analogously, we may say that ``$\argx$ can reach $S'$'', given that $\{\argx\}$ can reach $S'$.
Given $S \subseteq \Args$, we define $\graph \downarrow_{S} := \left(S, \tau \cap (S \times \interval), \Att \cap (S \times S), \Supp \cap (S \times S) \right)$.

Gradual semantics determine the final strengths of arguments in a QBAG.
\begin{definition}[Gradual Semantics and Strength Function~\cite{Potyka:2019,Baroni:Rago:Toni:2019}]
\label{def:semantics}
A \emph{gradual semantics} $\fs$ defines for $\graph = \QBAG$ a (possibly partial) \emph{final strength function} $\fs_{\graph} : \Args \to \interval \cup \{ \bot \}$ that assigns the \emph{final strength} $\fs_{\graph}(\argx)$ to each $\argx \in \Args$, 
where $\bot$ is a reserved symbol meaning `undefined'. 
\end{definition}

There is a variety of gradual semantics \cite{baroni2015automatic,rago2016discontinuity,amgoud2018evaluation,Potyka:2018}, most of which belong to the class of \emph{modular semantics} \cite{mossakowski2018modular}.
Modular semantics define the strengths
of arguments by an iterative process. The strength values of all arguments are initialized with their initial strengths. Then the arguments' strength values are updated based on the strengths of their parents and the base score until they converge.
Since the procedure can fail to
converge in cyclic graphs, Definition~\ref{def:semantics} defines the final strength function as a partial function.

Modular semantics are called \emph{modular} because their update function can be decomposed into an aggregation function that aggregates the
strength values of attackers and supporters, and an influence function that uses the aggregate to adapt the initial strength.
Intuitively, supporters increase the aggregate while attackers decrease it based on their strengths. A positive aggregate will increase the
initial strength, while a negative aggregate will decrease it (cf. Example~\ref{ex:intro}).
Table~\ref{table:semantics} provides some aggregation and influence functions from the literature~\cite{mossakowski2018modular,Potyka:2019}.
By combining them, we can obtain the semantics listed in Table~\ref{table:semanticsExamples}
including 
DF-QuAD~\cite{rago2016discontinuity},
Euler-based~\cite{amgoud2018evaluation}
and quadratic energy~\cite{Potyka:2018} semantics.
\begin{table}[ht]
\centering
\footnotesize{
\begin{tabular}{ll}
\hline
\multicolumn{2}{c}{\textbf{Aggregation Functions}} \\ \hline
Sum & $\alpha^{\Sigma}_{v}(s) = \sum_{i = 1}^n v_i \times s_i $  \\
Product & $\alpha^{\Pi}_{v}(s) = \prod_{i:v_i=-1} (1 - s_i) - \prod_{i:v_i=1} (1 - s_i)$  \\
\hline
\multicolumn{2}{c}{\textbf{Influence Functions}} \\ \hline
Linear($k$) & $\iota^{l}_{w}(s) = w - \frac{w}{k} \times max\{0,-s\} + \frac{1-w}{k} \times max\{0, s\}$ \\
Euler-based & $\iota^{e}_{w}(s) = 1 - \frac{1-w^2}{1 + w \times e^s}$ \\
p-Max($k$) & $\iota^{p}_{w}(s) = w - w \times h(- \frac{s}{k}) + (1-w) \times h(\frac{s}{k}),$  \\
for $p \in \mathbb{N}$  & where $h(x) = \frac{max\{0,x\}^p}{1 + max\{0,x\}^p}$  \\
\end{tabular}}
\caption{Common aggregation $\alpha$ and influence $\iota$ functions. Intuitively, $s \in [0,1]^n$ is a strength vector 
(associating each argument with its current strength),
$v \in \{-1,0,1\}^n$ is a relationship vector indicating
which arguments attack ($-1$), support ($1$) or are in no 
relationship to ($0$) the argument of interest,
and $w$ is an initial strength.
}
\label{table:semantics}
\end{table}

\begin{table}[ht]
\centering
\footnotesize{
\begin{tabular}{lll}
\hline
\textbf{Semantics}           & \textbf{Aggregation} & \textbf{Influence}  \\ \hline
DF-QuAD (DFQuAD)       & Product     & Linear(1) \\
Euler-Based (EB)   & Sum         & EulerBased  \\
QuadraticEnergy (QE)        & Sum         & 2-Max(1)  
\end{tabular}}
\caption{Examples of gradual semantics.}
\label{table:semanticsExamples}
\end{table}
%

%
%
%
%
%
%
We can compare gradual semantics based on their satisfaction of argumentation \emph{principles}.
Such principles can help us find classes of cases for when, and when not, we can find strength change explanations.
Below we provide the definitions of the principles that are relevant for our work.

\emph{Directionality} describes that adding attacks or supports 
can only affect their directed successors.
\begin{principle}[Directionality]\label{sprinciple:directionality}
A gradual semantics $\fs$ satisfies \emph{directionality} iff
for all 
$\graph = (\Args, \is, \att, \supp)$,
$\graph' = (\Args, \is, \att', \supp')$, and $\argx \in \Args$ s.t. $\att \cup \supp = \att' \cup \supp' \cup (\argy, \argz)$ for some $\argy \in \Args$, $\argz \in \Args \setminus \{\argx\}$ s.t. there is no directed path from $\argz$ to 
$\argx$, we have $\fs_{\graph}(\argx) = \fs_{\graph'}(\argx)$.
\end{principle}
Let us note that directionality is a very weak assumption and is satisfied by all modular semantics.
\begin{proposition}[\cite{PB24}, Theorem 20]
Every modular semantics satisfies directionality.
\end{proposition}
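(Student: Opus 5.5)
The plan is to reduce the claim to a statement about the iterative update process that defines any modular semantics, and then run an induction over iterations. Fix $\graph = \QBAG$ and $\graph' = (\Args, \is, \att', \supp')$ with $\att \cup \supp = \att' \cup \supp' \cup \{(\argy,\argz)\}$, and fix $\argx$ such that $\argz \neq \argx$ and $\argz$ cannot reach $\argx$. Let $U \subseteq \Args$ be the set of arguments that can reach $\argx$ in $\graph$, together with $\argx$ itself. By assumption $\argz \notin U$. Moreover, for every $\argu \in U$, the attackers and supporters of $\argu$ are identical in $\graph$ and $\graph'$. Indeed, the only edge on which the two graphs differ is $(\argy,\argz)$, and this edge enters $\argz \notin U$. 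Hence ${\cal R}^-_\graph(\argu) = {\cal R}^-_{\graph'}(\argu)$ and ${\cal R}^+_\graph(\argu) = {\cal R}^+_{\graph'}(\argu)$. I also want to check that the parents of any $\argu \in U$ again lie in $U$. This holds because such a parent reaches $\argu$, and $\argu$ reaches $\argx$ (or equals it), so the parent reaches $\argx$. Reachability in $\graph'$ is contained in reachability in $\graph$, so $U$ is closed under parents in both graphs.

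Next I will write the modular update explicitly. Let $s^{(0)}(\argu) = \is(\argu)$, and let
\[
s^{(k+1)}(\argu) = \iota_{\is(\argu)}\bigl(\alpha_{v_\argu}(s^{(k)})\bigr),
\]
where $v_\argu$ is the relationship vector of $\argu$. The aggregation depends only on the entries $s_i$ with $v_i \neq 0$, that is, only on the parents of $\argu$. I will state this explicitly for sum and product aggregation, and in general it is part of what it means for a semantics to be modular. I then prove by induction on $k$ that $s^{(k)}_\graph(\argu) = s^{(k)}_{\graph'}(\argu)$ for all $\argu \in U$. The base case holds because both graphs share $\is$. For the step, take $\argu \in U$. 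It has the same relationship vector in both graphs, its initial strength is the same, and all its parents lie in $U$, where the values agree by the induction hypothesis. So the update values agree.

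From the induction, the sequences $(s^{(k)}_\graph(\argx))_k$ and $(s^{(k)}_{\graph'}(\argx))_k$ are identical. Therefore one converges iff the other does, and then to the same limit. This gives $\fs_\graph(\argx) = \fs_{\graph'}(\argx)$, including the case where both are $\bot$. One subtlety is how divergence is handled: a modular semantics typically marks the result as undefined if the global process fails to converge, and the global process involves arguments outside $U$. If $\bot$ is assigned globally rather than per argument, the argument above must be refined. I would handle this by adopting the per-argument limit reading, under which divergence elsewhere does not affect $\argx$, or else restrict to the convergent case.

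I expect the main obstacle to be formal rather than mathematical. The argument needs a precise notion of modular semantics in which the aggregation genuinely ignores non-parents (relationship entries $0$), and it needs the convention for $\bot$ to be per argument. Once these are fixed, the inductive argument on $U$ is routine.
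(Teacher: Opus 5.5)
Your proof is correct, and it is more self-contained than what the paper offers. The paper gives no argument of its own for this statement; it imports the result from Theorem~20 of \cite{PB24}. The only comparable reasoning in the paper is the appendix proof of strong directionality, a one-line appeal to locality phrased at the level of \emph{final} strengths (an argument's final strength depends only on its base score and its parents' final strengths). Your induction over the iterates $s^{(k)}$ on the parent-closed set $U$ is the rigorous form of that observation, and it is what actually works when the graph has cycles. A locality statement about fixed points does not by itself say which fixed point the process reaches or whether it converges, whereas identical trajectories at $\argx$ give identical limits and identical divergence.

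The per-argument reading of $\bot$ you adopt is not merely convenient but necessary, and it matches Definition~\ref{def:semantics}. Under a global convention the principle fails. For example, take DF-QuAD with $\att'=\{(\argz,\argy)\}$, $\is(\argy)=\is(\argz)=1$, and an isolated $\argx$. Adding $(\argy,\argz)$ makes the iteration on $(\argz,\argy)$ oscillate between $(1,1)$ and $(0,0)$, which would make $\fs_\graph(\argx)$ undefined although $\fs_{\graph'}(\argx)=\is(\argx)$.

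Finally, state explicitly that you read $\att\cup\supp=\att'\cup\supp'\cup\{(\argy,\argz)\}$ as ``$\graph$ is $\graph'$ plus one edge, all other edges keeping their polarity''. The literal equation constrains only the union and would allow flipping an attack on $\argx$ into a support, in which case the principle is false.
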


Below, we assume an arbitrary QBAG $\graph = (\Args, \is, \att, \supp)$ and $\argx \in \Args$.
\emph{Stability} describes that any differences between initial and final strength of an argument depend on the presence of attackers or supporters.
\begin{principle}[Stability]\label{sprinciple:stability}
    A gradual semantics $\sigma$ satisfies the \emph{stability principle} iff ${\cal R}^-(\argx) = {\cal R}^+(\argx) = \emptyset$ implies that $\fs(\argx) = \is(\argx)$.
\end{principle}
Stability is a special case of the \emph{balance} property \cite{BaroniRT18}, which states that if the attackers and supporters of an argument
are equally strong, then its final strength is just its base score.
\begin{principle}[Balance]\label{sprinciple:balance}
    A gradual semantics $\sigma$ satisfies the \emph{balance principle} iff it holds that if the multisets $\{\fs(\argy) \mid \argy \in {\cal R}^-(\argx)\}$ and $\{\fs(\argy) \mid \argy \in {\cal R}^+(\argx)\}$ are equal then $\fs(\argx) = \is(\argx)$. 
\end{principle}
\emph{Balance} does not hold for arbitrary modular semantics, but for 
every \emph{elementary} modular semantics. 
\begin{proposition}[\cite{PB24}, Theorem 20]
Every elementary modular semantics satisfies balance.
\end{proposition}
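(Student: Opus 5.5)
The plan is to work in the modular-semantics framework of \cite{mossakowski2018modular,PB24}. There a semantics is given by an aggregation function $\alpha$ and an influence function $\iota$. Strengths are computed iteratively by $s^{(t+1)}_{\argx} = \iota_{\is(\argx)}\bigl(\alpha_{v_{\argx}}(s^{(t)})\bigr)$, and $\fs_{\graph}(\argx)$ is the limit whenever it exists. We use the standard definition of \emph{elementary}: the aggregation is neutral and symmetric, so that attackers and supporters of equal strength cancel out to the neutral aggregate $0$; the influence function is stable, i.e.\ $\iota_w(0) = w$; and aggregation ignores arguments with $v_i = 0$. The key steps are as follows.

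\textbf{Step 1 (aggregation).} Show that if the multiset of attacker strengths equals the multiset of supporter strengths, then $\alpha_{v}(s) = 0$.
For Sum, the attacker terms contribute $-\sum s_i$ and the supporter terms contribute $+\sum s_i$. Equal multisets give equal sums, so the total is $0$.
For Product, the attacker product $\prod (1-s_i)$ and the supporter product $\prod (1-s_i)$ range over equal multisets, so the difference is $0$.
Abstractly, it suffices that $\alpha$ is symmetric under permutations of arguments together with their relation labels, and is invariant under swapping an attacker and a supporter of equal strength. Iterating this swap pairs the two multisets off and yields the neutral value.

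\textbf{Step 2 (influence).} Show that $\iota_w(0) = w$ for the listed influence functions, which are representative of the elementary ones.
For Linear($k$), both max terms vanish at $s=0$.
For Euler-based, $1 - \frac{1-w^2}{1+w} = 1-(1-w) = w$.
For p-Max, $h(0)=0$.

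\textbf{Step 3 (fixed point).} Combine the two steps at the level of final strengths. Suppose $\fs_{\graph}(\argx)$ is defined, so the iteration converges, and the multiset hypothesis holds for the limit values $\fs(\argy)$. Take the update equation at the limit. Continuity of $\alpha$ and $\iota$, which holds for all listed functions, lets us pass to the limit and gives $\fs(\argx) = \iota_{\is(\argx)}\bigl(\alpha_{v_{\argx}}(\fs)\bigr)$. By Step 1 the aggregate is $0$, and by Step 2 we get $\fs(\argx) = \iota_{\is(\argx)}(0) = \is(\argx)$.

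\textbf{Main obstacle.} Step 3 is the delicate part. The hypothesis constrains only the \emph{limit} strengths, not the intermediate iterates, so the argument must go through the fixed-point equation rather than an induction over iterations. It must also be restricted to the case where $\fs(\argx) \neq \bot$.

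I would handle this by stating the result for converging QBAGs and invoking continuity of $\alpha$ and $\iota$. Continuity is part of the elementary assumption in \cite{PB24}; if it were not, I would add it as a hypothesis.

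Stability then follows as the special case where both multisets are empty. Then $\alpha$ of an all-zero relationship vector is $0$, and $\iota_{\is(\argx)}(0) = \is(\argx)$.
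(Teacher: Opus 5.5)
\textbf{Comparison with the paper.} The paper gives no proof of this statement. It imports it as Theorem~20 of \cite{PB24} and never defines ``elementary''. Your argument supplies the missing content by the natural (and presumably PB24's) route:
\begin{enumerate*}[label=(\roman*)]
\item write the update equation at the limit;
\item use cancellation of equal attacker/supporter multisets to get aggregate $0$;
\item use $\iota_w(0)=w$.
\end{enumerate*}
Your concrete checks for Sum, Product, Linear($k$), Euler-based and $p$-Max are correct. Obtaining stability as the empty-multiset case matches the paper's remark that stability is a special case of balance. The citation buys independence from any particular axiomatisation of ``elementary''. Your route buys transparency about exactly which properties are used.

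\textbf{Two corrections, neither fatal to the main line.}

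First, the ``abstract'' sufficiency claim in Step~1 is false. Swapping the labels of an equal-strength attacker and supporter merely permutes the input, so permutation symmetry plus swap-invariance pairs nothing off. For example, take $\alpha_v(s) = 1 + \sum_i v_i s_i$ if $v_i s_i \neq 0$ for some $i$, and $\alpha_v(s)=0$ otherwise. This is symmetric and swap-invariant, and it ignores unrelated or zero-strength arguments. Yet it equals $1$ for one attacker and one supporter of equal strength $c>0$. What does suffice is either the cancellation property itself, or antisymmetry $\alpha_{-v}(s) = -\alpha_v(s)$, which Sum and Product both satisfy. With equal multisets, permutation symmetry then gives $\alpha_v(s)=\alpha_{-v}(s)=-\alpha_v(s)$, hence $0$. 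Since you build cancellation directly into your definition of elementary, the proof stands, but that remark should be dropped or replaced.

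Second, continuity is only needed for cyclic graphs. In an acyclic QBAG the iteration becomes constant after finitely many rounds, so the update equation holds exactly for the final strengths without any limit argument. For cyclic graphs, your passage to the limit does rely on continuity of $\alpha$ and $\iota$, or on defining final strengths as fixed points, whatever exactly \cite{PB24} assumes.
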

Most QBAG semantics are elementary modular semantics including the
Euler-based~\cite{amgoud2018evaluation}, DF-QuAD~\cite{rago2016discontinuity}, Quadratic Energy~\cite{Potyka:2018}, MLP-based~\cite{Potyka21},
TAC and CRC semantics~\cite{PB24}. They therefore satisfy both directionality and balance.

\section{Strength Change Explanations (SXs)}
\label{sec:explanations}
Strength change explanations explain how to establish a desired \emph{final strength ordering} of arguments of interest, which we may call \emph{topic arguments}.
Here and henceforth, we assume a QBAG $\graph = \QBAG$ and a final strength function $\fs_{\graph}$, omitting the subscript $\graph$ where the context is clear.
\begin{definition}[Final Strength Ordering, $\preceq$-Satisfaction]
\label{def:fs-ordering}
Given $S \subseteq \Args$, we define the \emph{final strength ordering} $\preceq^S_{\graph,\fs}$
as follows:
$$\preceq^S_{\graph,\fs} := \{(\argx, \argy)\mid \argx, \argy \in S, \fs_\graph(\argx)\leq  \fs_\graph(\argy) \}.$$
For $\preceq \subseteq \Args \times \Args$ 
we say that ``$\graph$ satisfies $\preceq$ w.r.t. $\fs$'' iff $\preceq^S_{\graph,\fs} = \preceq$ for some $S \subseteq \Args$.
For $\argx, \argy \in \Args$ s.t. $(\argx, \argy) \in \preceq^S_{\graph,\fs}$ we use the short-hand $\argx \preceq^S_{\graph,\fs}~\argy$; we drop the subscript $\fs$ where the context is clear.
\end{definition}
We go back to the initial figure to give an example of a final strength ordering.
\begin{example}
\label{ex:final-strength-ordering}
Consider again $\graph$ from Figure~\ref{fig:intro:g}.
Because $\preceq^\Args_\graph$ is a total preorder, we can represent it as a sequence of sets, i.e., we can denote $\preceq^\Args_\graph$ by $\langle \{ \argd, \arge \}, \{ \arga \}, \{ \argc \}, \{ \argb \}   \rangle$.
\end{example}
Our strength change explanations will change the initial strengths of some the arguments of a QBAG.
We first define such \emph{strength changes} in general.
\begin{definition}[Strength Change]
\label{def:strength-change}
    A \emph{strength change} of a QBAG $\graph = \QBAG$ is a (possibly partial) function $\delta_\graph: \Args \rightarrow \interval$
    such that $\delta_\graph(\argx) \neq \tau(\argx)$ whenever $\delta_\graph(\argx) \neq \bot$.
    We call $\ddom(\delta_\graph) = \{\argy \mid \argy \in \Args, \delta_\graph(\argy) \neq \bot \}$
    its \emph{defined domain}.
\end{definition}
We let $\delta_\emptyset := \emptyset$, which we call the \emph{empty strength change}.

When a strength change is applied to a QBAG, it yields a QBAG with an updated initial strength function.
\begin{definition}[Strength Change Application]
\label{def:strength-change-application}
    We define the application of a strength change $\delta_\graph$ to $\graph$, denoted by $f(\graph, \delta)$, as $(\Args, \is', \Att, \Supp)$, where for $\argx \in \Args$:
    \begin{align*}
        \is'(\argx)   =
        \begin{cases}
          \delta(\argx), & \text{if}\ \argx \in \ddom(\delta); \\
          \is(\argx), & \text{otherwise}.
    \end{cases}
    \end{align*}
\end{definition}
We illustrate the concept with our running example.
\begin{example}
\label{ex:strength-change}
In Figure~\ref{fig:intro}, we can see the following strength changes and their applications:
\begin{itemize}
    \item $\delta'_\graph = \{(\arga, 2), (\arge, 3)\}$; applied to $\graph$ we obtain $f(\graph, \delta') = \graph'$.
    \item $\delta''_\graph = \{(\arga, 3)\}$; applied to $\graph$ we obtain $f(\graph, \delta'') = \graph''$.
    \item $\delta^*_\graph = \{(\arga, 2), (\arge, 4)\}$; applied to $\graph$ we obtain $f(\graph, \delta^*) = \graph^*$.
\end{itemize}
\end{example}
%

Now we can define which strength changes amount to SXs.
For this, we assume a preorder $\preceq$ on our set of arguments $\Args$, representing the desired final strength ordering.
For $\argx, \argy \in \Args$ s.t. $(\argx, \argy) \in \preceq$, we use the shorthand $\argx \preceq \argy$, and if and only if it then does not hold that $\argy \preceq \argx$, we may use  $\argx \prec \argy$.

\begin{definition}[Final Strength Change Explanation (SX)]
\label{def:fscx}
     A strength change $\delta_\graph$ of $\graph$ is a \emph{final Strength change eXplanation} (SX) of $\preceq$ w.r.t. $\fs$ and the \emph{mutable argument set} $M \subseteq \Args$ iff it holds that $\ddom(\delta_\graph) \subseteq M$ and $f(\graph,\delta)$ satisfies $\preceq$ w.r.t. $\sigma$. We denote the set of all SXs of $\preceq$ w.r.t. $\fs$ and $M$ by $SX_{\graph,M}^{\fs}(\preceq)$. $\{\argx | \argx \in \Args, \exists \argy \in \Args: (\argx \preceq \argy)$ or $(\argy \preceq \argx)\}$ is called the \emph{topic set}, denoted by $T(\preceq)$.
\end{definition}
Whenever we have SXs w.r.t. $\sigma$ and $\Args$, we may drop the subscript $M$ and denote them by $SX_\graph^\fs(\preceq)$; analogously we may then call $\delta_\graph \in SX_\graph^\fs(\preceq)$ an ``SX of $\preceq$ w.r.t. $\fs$''.
\begin{example}
\label{ex:scx}
We continue Example~\ref{ex:strength-change} and let $\preceq$ be the reflexive closure of $\{(\argb, \argc)\}$. We can verify that $\delta'_\graph$, $\delta''_\graph$, and $\delta^*_\graph$ are all SXs of $\preceq$ w.r.t. our naive semantics $\fs$ and $M = \{\arga, \arge \}$ as we achieve the desired ordering via $\fs_{\graph'}(\argc) > \fs_{\graph'}(\argb)$, $\fs_{\graph''}(\argc) > \fs_{\graph''}(\argb)$, and $\fs_{\graph^*}(\argc) > \fs_{\graph^*}(\argb)$.
\end{example}

The empty strength change $\delta_\emptyset$ is an SX only if a given QBAG already satisfies the desired final strength ordering.
\begin{restatable}{lemma}{lemmaemptyisx}\label{lemma:empty-is-x}
    $\forall M \subseteq \Args$, it holds that $\delta_\emptyset \in SX_{\graph,M}^\fs(\preceq)$ iff $\graph$ satisfies $\preceq$ w.r.t. $\fs$.
\end{restatable}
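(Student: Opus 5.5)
The plan is to unfold Definition~\ref{def:fscx} for the particular strength change $\delta_\emptyset$ and observe that both of its conditions collapse. I fix an arbitrary $M \subseteq \Args$ and keep it arbitrary throughout, since nothing in the argument will depend on the choice of $M$.

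\emph{Step 1 (the domain condition is vacuous).} Since $\delta_\emptyset = \emptyset$ is defined nowhere, $\ddom(\delta_\emptyset) = \emptyset$. Hence $\ddom(\delta_\emptyset) \subseteq M$ holds for every $M$. The side condition of Definition~\ref{def:strength-change}, that $\delta(\argx) \neq \is(\argx)$ wherever $\delta$ is defined, is satisfied vacuously, so $\delta_\emptyset$ is indeed a strength change of $\graph$. By Definition~\ref{def:fscx}, it follows that $\delta_\emptyset \in SX_{\graph,M}^\fs(\preceq)$ iff $f(\graph,\delta_\emptyset)$ satisfies $\preceq$ w.r.t. $\fs$.

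\emph{Step 2 (applying the empty change is the identity).} By Definition~\ref{def:strength-change-application}, $f(\graph,\delta_\emptyset) = (\Args, \is', \Att, \Supp)$, where $\is'(\argx) = \is(\argx)$ for every $\argx$, because no $\argx$ lies in $\ddom(\delta_\emptyset)$. So $\is' = \is$ and $f(\graph,\delta_\emptyset) = \graph$ as tuples. Consequently $\fs_{f(\graph,\delta_\emptyset)} = \fs_\graph$, and for every $S \subseteq \Args$ the orderings agree: $\preceq^S_{f(\graph,\delta_\emptyset),\fs} = \preceq^S_{\graph,\fs}$.

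\emph{Step 3 (conclude).} The satisfaction condition of Definition~\ref{def:fs-ordering} reads ``$\preceq^S = \preceq$ for some $S \subseteq \Args$'', and it depends only on these orderings. By Step~2, $f(\graph,\delta_\emptyset)$ satisfies $\preceq$ w.r.t. $\fs$ iff $\graph$ does. Combining this with Step~1 gives the biconditional for each $M$, hence for all $M$.

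There is no real obstacle here; the argument is a direct unfolding of the definitions. The only point needing care is Step~2. The final strength function is determined by the QBAG as a tuple, even when it is partial and takes the value $\bot$. So the identity $f(\graph,\delta_\emptyset) = \graph$ transfers the final strengths, including any undefined values, without further argument.
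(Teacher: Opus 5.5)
Your proposal is correct and matches the paper's proof: both rest on $f(\graph,\delta_\emptyset)=\graph$ (Definition~\ref{def:strength-change-application}), so $f(\graph,\delta_\emptyset)$ satisfies $\preceq$ exactly when $\graph$ does. You also state explicitly that $\ddom(\delta_\emptyset)=\emptyset\subseteq M$, which the paper leaves implicit, and you prove both directions of the biconditional at once instead of separately.
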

Note that the proofs of lemmata have been relegated to the technical appendix.

\section{Optimal SXs}
\label{sec:optimality}

Strength change explanations should not modify the original QBAG more than necessary.
%
%
This desideratum gives rise to our notion of \emph{approximate} and \emph{optimal} SXs.
As a prerequisite, we define the \emph{amount of change} of an SX as the sum of its arguments' deltas to their initial strengths in the QBAG.
\begin{definition}[Amount of Change]
Given a strength change $\delta_\graph$, we call $\| \delta_\graph \| = \sum_{\argx \in \ddom(\delta_\graph)} | \delta_\graph(\argx) - \is(\argx)|$
the \emph{amount of change of $\delta_\graph$}.
\end{definition}
An SX is an $\epsilon$-approximate SX if it changes the initial strengths of arguments by not more than $\epsilon$ more than necessary. Hence, a  $0$-approximate SX is \emph{optimal}.
\begin{definition}[Approximate and optimal SX]
\label{def:optimal-fscx}
$\delta_\graph \in SX^\fs_{\graph, M}(\preceq)$ is an \emph{$\epsilon$-approximate SX} of $\preceq$ w.r.t. $\fs$, and $M \subseteq \Args$ iff there exists no $\delta'_\graph \in SX^\fs_{\graph, M}(\preceq)$ s.t. $\|\delta_\graph' \| < \|\delta_\graph\|- \epsilon$ (with $\epsilon \in \mathbb{R}_{\geq 0}$).
$\delta_\graph$ is an \emph{optimal SX} iff it is a $0$-approximate SX.
We denote the set of all $\epsilon$-approximate SXs of $\preceq$ w.r.t. $\fs$ and $M$ by $SX_{\graph, M}^{\fs*}(\preceq, \epsilon)$.
\end{definition}
Again, whenever we have $\epsilon$-approximate SXs w.r.t. $\sigma$ and $\Args$, we may drop the subscript $M$ and simply denote them by $SX_\graph^{\fs*}(\preceq, \epsilon)$; analogously we may then call $\delta_\graph \in SX_\graph^{\fs*}(\preceq,\epsilon)$ an ``$\epsilon$-approximate SX of $\preceq$ w.r.t. $\fs$''.
\begin{example}
\label{ex:optimal-scx}
Consider again the desired ordering $\preceq$ (Example~\ref{ex:scx}), the mutable set $M = \{\arga, \arge \}$, and the strength changes $\delta'_\graph$, $\delta''_\graph$, and $\delta^*_\graph$ (Example~\ref{ex:strength-change}).
We can observe that:
\begin{itemize}
    \item $\delta'_\graph$ and $\delta''_\graph$ are $1$-approximate SXs of $\preceq$ w.r.t. our naive semantics $\fs$ and $M$. In the case of $\delta'_\graph$, we change the initial strengths of $\arga$ to $2$ and of $\arge$ to $3$ (in sum, a change of $\, 2$). Given our wiggle room of $\epsilon = 1$, we cannot change the initial strengths of $\arga$ and $\arge$ substantially less to achieve the desired ordering: while we could change the initial strength of $\arga$ just marginally more and abstain from changing $\arge$, the change to $\arga$ would then amount to $> 1$, which is greater than $2 - \epsilon$. In the case of $\delta''_\graph$, we clearly need to change the initial strength of at least one argument and we cannot change the initial strength of $\arga$ substantially less (a change to at least marginally greater than $2$ is required).
    \item In contrast, $\delta^*_\graph$ is not a $1$-approximate SX, as the strength change $(\{\arga, 2.99\})$ (a slightly smaller change than $\delta''_\graph$) achieves the desired ordering by changing the initial strength of $\arga$ substantially (by more than $1$) less.
\end{itemize}
\end{example}
If a desired strength ordering is already satisfied, only the empty strength change is an optimal SX.
\begin{restatable}{lemma}{lemmaemptyisoptimal}\label{lemma:empty-is-optimal}
    If $\graph$ satisfies $\preceq$ w.r.t. $\fs$ then $SX_\graph^{\fs*}(\preceq, 0) = \{\delta_\emptyset\}$.
\end{restatable}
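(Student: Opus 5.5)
The plan is to prove the two inclusions separately, using Lemma~\ref{lemma:empty-is-x} together with the elementary fact that the amount of change vanishes only on the empty strength change. Throughout, $M = \Args$, since $SX_\graph^{\fs*}(\preceq, 0)$ is the set of optimal SXs w.r.t.\ $\fs$ and the full mutable set $\Args$.

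\emph{Inclusion $\{\delta_\emptyset\} \subseteq SX_\graph^{\fs*}(\preceq, 0)$.}
Because $\graph$ satisfies $\preceq$ w.r.t.\ $\fs$, Lemma~\ref{lemma:empty-is-x} with $M = \Args$ gives $\delta_\emptyset \in SX_{\graph}^\fs(\preceq)$. By definition, $\|\delta_\emptyset\| = 0$, since the sum over the empty domain is $0$. Every strength change $\delta'$ has $\|\delta'\| \geq 0$, because it is a sum of absolute values. So no SX $\delta'$ can satisfy $\|\delta'\| < 0 - 0$. By Definition~\ref{def:optimal-fscx}, $\delta_\emptyset$ is therefore $0$-approximate, i.e., optimal.

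\emph{Inclusion $SX_\graph^{\fs*}(\preceq, 0) \subseteq \{\delta_\emptyset\}$.}
Let $\delta$ be an optimal SX and suppose, for contradiction, that $\delta \neq \delta_\emptyset$. Then $\ddom(\delta)$ contains some argument $\argx$. By Definition~\ref{def:strength-change}, $\delta(\argx) \neq \is(\argx)$, so $|\delta(\argx) - \is(\argx)| > 0$. Since all summands of $\|\delta\|$ are nonnegative, this gives $\|\delta\| > 0$. But $\delta_\emptyset$ is an SX with $\|\delta_\emptyset\| = 0 < \|\delta\| - 0$, which contradicts the $0$-approximateness of $\delta$. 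Hence $\delta = \delta_\emptyset$.

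The only delicate point is the strict positivity $\|\delta\| > 0$ for a nonempty strength change. It rests on the requirement in Definition~\ref{def:strength-change} that defined values differ from the original initial strengths, so that a nonempty domain contributes a strictly positive term. Without that requirement, a change such as $\{(\argx, \is(\argx))\}$ would also have zero cost and uniqueness would fail. Everything else is immediate from the definitions and Lemma~\ref{lemma:empty-is-x}.
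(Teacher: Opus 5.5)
Your proposal is correct and takes essentially the same approach as the paper. In both, $\delta_\emptyset$ is an SX by Lemma~\ref{lemma:empty-is-x} with $\|\delta_\emptyset\| = 0$, while any other SX has some $\argx \in \ddom(\delta)$ with $\delta(\argx) \neq \is(\argx)$, so $\|\delta\| > 0$; it is therefore strictly beaten by $\delta_\emptyset$ and cannot be $0$-approximate, exactly as in the paper's items (i) and (ii).
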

%

\section{(Non-)Existence of SXs}
\label{sec:approximating}
%
Finding SXs is a difficult problem.
Note that SXs may not exist.
For example, if the mutable arguments cannot reach our topic arguments, given modular semantics we cannot achieve 
a desired ordering by changing their initial strengths.
Below, we analyse some basic properties w.r.t. the existence and non-existence of SXs.
As a prerequisite, we introduce some additional gradual semantics principles.
The first one is a variant of directionality (Principle~\ref{sprinciple:directionality}) pertaining to the existence of arguments rather than edges.
\begin{principle}[Strong Directionality]\label{sprinciple:strong-directionality}
A gradual semantics $\fs$ satisfies the \emph{strong directionality principle} iff
for all 
$\graph = (\Args, \is, \att, \supp)$, $\argx \in \Args$, and $\graph' := \graph \downarrow_{\Args \setminus \Args'}$ s.t. $\Args'$ cannot reach $\argx$ it holds that $\fs_\graph(\argx)  = \fs_{\graph'}(\argx)$.   
\end{principle}
Modular semantics satisfy strong directionality.
\begin{restatable}{lemma}{lemmastrongdirectionality}\label{lemma:strong-directionality}
    Every modular semantics satisfies strong directionality.
\end{restatable}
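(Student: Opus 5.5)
We prove that every modular semantics satisfies strong directionality. The approach is to show by induction over the iterations of the modular update procedure that the strength values of all arguments that can reach $\argx$ coincide in $\graph$ and in $\graph' = \graph\downarrow_{\Args\setminus\Args'}$. Let $U \subseteq \Args$ be the set of arguments that can reach $\argx$, together with $\argx$ itself. Since $\Args'$ cannot reach $\argx$, we have $U \cap \Args' = \emptyset$, so $U \subseteq \Args \setminus \Args'$.

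First, $U$ is closed under parents. If $\argy \in U$ and $\argz$ attacks or supports $\argy$, then $\argz$ can reach $\argx$ through $\argy$, so $\argz \in U$. Consequently ${\cal R}^-_\graph(\argy) = {\cal R}^-_{\graph'}(\argy)$ and ${\cal R}^+_\graph(\argy) = {\cal R}^+_{\graph'}(\argy)$ for every $\argy \in U$. This holds because the restriction keeps exactly the edges among $\Args\setminus\Args'$, and all parents of $\argy$ lie in $U \subseteq \Args\setminus\Args'$. The initial strengths of arguments in $U$ are also unchanged.

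Second, let $s^{(k)}_\graph$ and $s^{(k)}_{\graph'}$ denote the strength vectors after $k$ iterations. At $k=0$ both equal the initial strengths, so they agree on $U$. For the inductive step, the update of $\argy \in U$ applies the influence function to $\is(\argy)$ and the aggregate of the current strengths of $\argy$'s attackers and supporters. The aggregation and influence functions depend only on these values; arguments with relationship $0$ do not contribute in any of the aggregation functions. By the parent equality and the induction hypothesis, the inputs are identical, so $s^{(k+1)}_\graph(\argy) = s^{(k+1)}_{\graph'}(\argy)$.

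Third, we pass to the limit. The sequences for $\argx$ are identical in both graphs, so they converge in $\graph$ iff they converge in $\graph'$, with the same limit. If neither converges, both final strengths are $\bot$. Hence $\fs_\graph(\argx) = \fs_{\graph'}(\argx)$.

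The main obstacle is formal rather than mathematical. The modular framework writes the aggregation over a vector indexed by all arguments, and the vector dimensions differ between $\graph$ and $\graph'$. We need that aggregation is insensitive to zero-relationship entries, which we take as part of the definition of modular semantics and which holds for the listed sum and product functions.

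We could instead try to reduce the statement to the already established directionality (Theorem 20 of \cite{PB24}) by first deleting all edges incident to $\Args'$ one by one and then deleting the now isolated arguments. Each removed edge $(\argy,\argz)$ satisfies $\argz \ne \argx$ and $\argz$ cannot reach $\argx$. However, the final step of deleting isolated arguments still requires the iteration argument above, so we prefer the direct induction.
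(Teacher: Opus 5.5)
Your proof is correct. It rests on the same locality fact as the paper's one-line argument: under a modular semantics, an argument's strength is computed only from its own initial strength and the strengths of its attackers and supporters.

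The difference is where you apply that fact. The paper states it for \emph{final} strengths. You apply it to the \emph{iterates} $s^{(k)}$ on the parent-closed set $U$ of arguments that can reach $\argx$, and only then pass to the limit. For acyclic QBAGs the paper's version suffices and is shorter: induct along a topological order of $U$. If $U$ contains a cycle, however, it is not enough that the final strengths in $\graph$ and in $\graph'$ satisfy the same fixed-point equations on $U$, since such systems can have several solutions. Your observation that the iterate sequences on $U$ coincide, and hence have the same limits or both diverge, is what makes the argument work for cyclic QBAGs and for the $\bot$ case.

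You should state two conventions your argument relies on. First, $\argx \notin \Args'$; this is needed both for $\fs_{\graph'}(\argx)$ to be defined and for your claim $U \cap \Args' = \emptyset$. Second, $\fs_\graph(\argx) = \bot$ is determined by the convergence of $\argx$'s own sequence, not of the whole strength vector. Under the whole-vector reading, the $\bot$ case of your limit step fails, and so does the lemma. A divergent cycle inside $\Args'$ makes $\graph$ undefined everywhere while $\graph'$ may converge. Conversely, an argument of $\Args \setminus \Args'$ outside $U$ that loses parents in $\Args'$ can make $\graph'$ diverge while $\graph$ converges.
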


Next, we consider a variant of monotonicity~\cite{BaroniRT18} that explicitly assumes an initial strength difference
(the general principle is not sufficiently explicit about this case for our purposes).
\begin{principle}[Weak Monotonicity]
\label{sprinciple:smonotonicity}
A gradual semantics $\fs$ satisfies the \emph{weak monotonicity principle} iff
for all 
$\graph = (\Args, \is, \att, \supp)$, $\argx, \argy \in \Args$, the following statements hold if ${\cal R}^-(\argx) \supseteq {\cal R}^-(\argy)$ and ${\cal R}^+(\argx) \subseteq {\cal R}^+(\argy)$:
\begin{enumerate}
    \item if $\is(\argx) \leq \is(\argy)$ then $\fs(\argx) \leq \fs(\argy)$;
    \item if $\sigma(\argy) < \sigma(\argx)$ then $\is(\argy) < \is(\argx)$.
\end{enumerate}
\end{principle}
Intuitively, one would expect that many modular semantics satisfy weak monotonicity: initially weaker arguments with strictly less (or the same) attackers and more (or the same) supporters should be finally weaker as well.
\begin{proposition}
    DFQuAD, EB, and QE semantics satisfy weak monotonicity.
\end{proposition}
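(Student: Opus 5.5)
The plan is to reduce the principle to monotonicity properties of the aggregation and influence functions in Table~\ref{table:semantics}, using the fixed-point characterisation of modular semantics.

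First, item~2 of Principle~\ref{sprinciple:smonotonicity} is the contrapositive of item~1: ``$\is(\argx) \leq \is(\argy) \Rightarrow \fs(\argx) \leq \fs(\argy)$'' is logically equivalent to ``$\fs(\argy) < \fs(\argx) \Rightarrow \is(\argy) < \is(\argx)$''. So it suffices to prove item~1.

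Next, I would use the fact that for each of the three semantics, whenever the final strengths are defined, they form a fixed point of the update. Concretely, $\fs(\argz) = \iota_{\is(\argz)}(\alpha_{v_\argz}(\fs))$ for every $\argz$, where $\alpha_{v_\argz}(\fs)$ aggregates the final strengths of the parents of $\argz$. In the acyclic case this is immediate. In the cyclic-but-convergent case it follows because the update functions are continuous, so the limit of the iteration is a fixed point. I restrict attention to QBAGs in which $\fs$ is defined, and I take $\interval = [0,1]$.

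Let $A_\argx$ and $A_\argy$ denote the aggregates of $\argx$ and $\argy$, both evaluated at the \emph{same} vector $\fs$ of final strengths. The first key step is to show $A_\argx \leq A_\argy$ whenever ${\cal R}^-(\argx) \supseteq {\cal R}^-(\argy)$ and ${\cal R}^+(\argx) \subseteq {\cal R}^+(\argy)$.
\begin{itemize}
\item For Sum (EB, QE): all strengths are non-negative, so removing attackers and adding supporters can only increase the sum.
\item For Product (DFQuAD): each factor $1 - s_i$ lies in $[0,1]$. Hence the product over a superset of attackers is no larger, and the subtracted product over a superset of supporters is no larger, which gives $A_\argx \leq A_\argy$.
\end{itemize}

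The second key step is to show that each influence function $\iota_w(s)$ is nondecreasing in $s$ for fixed $w$, and nondecreasing in $w$ for fixed $s$ on the relevant range.
\begin{itemize}
\item Linear(1): monotonicity in $s$ holds since $w \geq 0$ and $1-w \geq 0$. In $w$, the derivative is $1 - \max\{0,-s\} - \max\{0,s\} = 1 - |s|$, which is non-negative because the product aggregate lies in $[-1,1]$.
\item Euler-based: for $w > 0$ the function is increasing in $s$, and for $w = 0$ it is constant. The derivative of $\frac{1-w^2}{1+we^s}$ in $w$ has numerator $-2w(1+we^s) - (1-w^2)e^s < 0$, so $\iota$ is increasing in $w$.
\item 2-Max(1): $h$ is nondecreasing, which gives monotonicity in $s$. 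The coefficient of $w$ is $1 - h(-s) - h(s)$. At most one of $h(s), h(-s)$ is nonzero, and $h < 1$, so this coefficient is non-negative.
\end{itemize}

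Combining the two steps gives the chain
$$\fs(\argx) = \iota_{\is(\argx)}(A_\argx) \leq \iota_{\is(\argx)}(A_\argy) \leq \iota_{\is(\argy)}(A_\argy) = \fs(\argy).$$
This proves item~1, and hence item~2.

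The main obstacle I expect is the fixed-point step for cyclic graphs, namely justifying that convergent final strengths satisfy the update equation. I would handle it by invoking continuity of the aggregation and influence functions. A secondary point needing care is the Linear(1) $w$-monotonicity, which relies on the range bound $|s| \leq 1$ of the product aggregate.
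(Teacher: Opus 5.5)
Your proposal is correct and takes the same route as the paper's proof. Both first show that the Sum/Product aggregate of $\argx$ is at most that of $\argy$ under the nested attacker/supporter sets, then apply monotonicity of the Linear(1), Euler-based and 2-Max(1) influence functions, and obtain Condition~2 as the contrapositive of Condition~1. You also supply details the paper leaves implicit: that each $\iota_w(s)$ is nondecreasing in $w$ (including the bound $|s| \leq 1$ needed for Linear(1)), and that convergent final strengths satisfy the fixed-point equation in cyclic graphs.
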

\begin{proof}
   Consider an aggregation function $\alpha$ that is either \emph{Product}, as applied by DFQuAD semantics, or \emph{Sum}, as applied by EB and QE semantics (cf. Tables~\ref{table:semantics} and~\ref{table:semanticsExamples}).
   Given two arguments $\argx$ and $\argy$ s.t. ${\cal R}^-(\argx) \supseteq {\cal R}^-(\argy)$, and ${\cal R}^+(\argx) \subseteq {\cal R}^+(\argy)$ and their strength and relationship vectors $s_\argx$, $v_\argx$ and $s_\argy$, $v_\argy$, respectively, it holds that $\alpha_{v_\argx}(s_\argx) \leq \alpha_{v_\argy}(s_\argy)$.
   Given this and if $\is(\argx) \leq \is(\argy)$ (Principle~\ref{sprinciple:smonotonicity}, Condition 1), it follows for an influence function $\iota$ that is either \emph{Linear(1)} (for DFQuAD semantics),  \emph{EulerBased} (for EB), or \emph{2-Max(l)} (for QE) that $\iota_{\is(\argx)}(\alpha_{v_\argx}(s_\argx)) \leq \iota_{\is(\argy)}(\alpha_{v_\argy}(s_\argy))$; 
   conversely, because $\alpha_{v_\argx}(s_\argx) \leq \alpha_{v_\argy}(s_\argy)$, if $\sigma(\argy) < \sigma(\argx)$ (Condition 2), this can only be achieved by differences in initial strengths, i.e., given $\is(\argy) < \is(\argx)$.
\end{proof}

Below, we assume our final strength function $\fs$ is based on a modular gradual semantics and we only consider QBAGs that do not have undefined final strengths given $\fs$.
We first give several conditions under which we cannot find SXs, given the desired ordering $\preceq$ is currently not satisfied, i.e., we assume that $\graph$ does not satisfy $\preceq$ w.r.t. $\fs$. 

If two arguments whose relative final strengths need to change cannot be reached by the set of mutable arguments,
then we cannot find an SX.
\begin{proposition}
\label{prop:not-reachable}
Given a modular semantics $\fs$ it holds that $SX_{\graph,M}^\fs(\preceq) = \emptyset$ if $\exists \argx, \argy \in \Args$ s.t. $\argx \preceq \argy$ but  $ \argx \not \preceq^\Args_\graph~\argy$ and $M$ cannot reach $\{\argx, \argy\}$.
\end{proposition}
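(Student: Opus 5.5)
The plan is to argue by contradiction: I will show that every strength change with defined domain in $M$ leaves the final strengths of $\argx$ and $\argy$ exactly as they are.

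Fix an arbitrary strength change $\delta_\graph$ with $\ddom(\delta_\graph) \subseteq M$, and let $\graph' = f(\graph,\delta)$. The goal is to show $\fs_{\graph'}(\argx) = \fs_\graph(\argx)$ and $\fs_{\graph'}(\argy) = \fs_\graph(\argy)$. Once this holds, $\argx \not\preceq^\Args_\graph \argy$ means $\fs_\graph(\argx) > \fs_\graph(\argy)$, hence $\fs_{\graph'}(\argx) > \fs_{\graph'}(\argy)$. Then, for every $S \subseteq \Args$, either $\argx$ or $\argy$ is not in $S$, so $(\argx,\argy) \notin \preceq^S_{\graph'}$, or both are in $S$ and $(\argx,\argy) \notin \preceq^S_{\graph'}$ by the strict inequality. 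In either case $\preceq^S_{\graph'} \neq \preceq$, because $(\argx,\argy) \in \preceq$. So $\graph'$ does not satisfy $\preceq$, $\delta_\graph$ is not an SX, and $SX^\fs_{\graph,M}(\preceq)=\emptyset$.

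For the invariance step I will use strong directionality, which holds for modular semantics by Lemma~\ref{lemma:strong-directionality}. Let $A'$ be the set of arguments that can reach $\{\argx,\argy\}$ (paths of length $\geq 1$), together with $\argx$ and $\argy$ themselves. Let $\Args'' = \Args \setminus A'$.

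By construction, $\Args''$ cannot reach $\argx$ or $\argy$: a path from some $\argz \in \Args''$ to $\argx$ would put $\argz$ in $A'$. Strong directionality therefore gives $\fs_\graph(\argx) = \fs_{\graph\downarrow_{A'}}(\argx)$, and likewise in $\graph'$; the same holds for $\argy$.

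Next I need $M \subseteq \Args''$. Since $M$ cannot reach $\{\argx,\argy\}$, no element of $M$ is a proper predecessor of $\argx$ or $\argy$. For the case where an element of $M$ equals $\argx$ or $\argy$, I read "can reach" as including the trivial path, so that $M \cap \{\argx,\argy\} = \emptyset$. This reading is needed, since otherwise changing $\argx$'s own initial strength would trivially break the claim. I will note this assumption explicitly.

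Given $M \subseteq \Args''$, the graphs $\graph\downarrow_{A'}$ and $\graph'\downarrow_{A'}$ coincide: they have the same arguments and relations, and their initial strengths agree because $\delta$ only modifies arguments in $M$. Therefore $\fs_{\graph\downarrow_{A'}} = \fs_{\graph'\downarrow_{A'}}$, and the invariance follows.

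The main obstacle is this reachability bookkeeping, in particular the edge case where $\argx$ or $\argy$ itself lies in $M$, which must be excluded by the reachability convention. A secondary point is the assumption that final strengths are defined, which the paper has already made.
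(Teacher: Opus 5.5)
Your proof is correct and takes the same route as the paper. Like the paper, you use strong directionality (Lemma~\ref{lemma:strong-directionality}) to show that no strength change confined to $M$ can alter $\fs(\argx)$ or $\fs(\argy)$, so the pair $(\argx,\argy)$ remains strictly reversed and $\preceq$ fails for every $S \subseteq \Args$. Your version makes explicit what the paper leaves implicit: the restriction to the ancestors of $\{\argx,\argy\}$, and the reading of ``can reach'' as including the trivial path, which gives $\argx,\argy \notin M$ and is needed for the claim to hold.
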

\begin{proof}
    Consider $\argx, \argy \in \Args$ s.t. $\argx \preceq \argy$ but $\argx \not \preceq^\Args_\graph~\argy$ (as assumed by the proposition).
    Observe that every modular semantics $\fs$ satisfies strong directionality (Lemma~\ref{lemma:strong-directionality}).
    Thus, because $M$ cannot reach $\{\argx, \argy\}$, for every $\delta_\graph$ s.t. $\{\argz\mid(\argz, s) \in \delta_\graph\} \subseteq M$, for every $S \subseteq \Args$ it must hold that $\argx \not \preceq^S_{f(\graph,\delta)}~\argy$ and consequently $\delta_\graph \not \in SX_{\graph, M}^{\fs}(\preceq)$, whence $SX_{\graph, M}^{\fs}(\preceq) = \emptyset$.
\end{proof}
An immediate consequence is
that we cannot find an SX if no mutable argument can reach any of the topic arguments.
\begin{restatable}{corollary}{nosxnotreachable}
 Given a modular semantics $\fs$, $SX_{\graph, M}^{\fs}(\preceq) = \emptyset$ holds if $M$ cannot reach $T(\preceq)$ and $\graph$ does not satisfy $\preceq$ w.r.t. $\fs$.
\end{restatable}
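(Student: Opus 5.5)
The plan is to reduce the corollary to Proposition~\ref{prop:not-reachable}. Alternatively, one can argue directly via strong directionality (Lemma~\ref{lemma:strong-directionality}). Either way, the first task is to extract from the hypothesis ``$\graph$ does not satisfy $\preceq$ w.r.t. $\fs$'' a concrete violated pair.

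The natural choice is $S = T(\preceq)$. Then $\preceq^{T(\preceq)}_\graph \neq \preceq$, so there exist $\argx, \argy \in T(\preceq)$ with $(\argx,\argy)$ in exactly one of the two relations. The case $\argx \preceq \argy$ but $\argx \not\preceq^\Args_\graph \argy$ is exactly the hypothesis of Proposition~\ref{prop:not-reachable}. There, $M$ cannot reach $\{\argx,\argy\} \subseteq T(\preceq)$ because it cannot reach $T(\preceq)$, and we conclude $SX^\fs_{\graph,M}(\preceq) = \emptyset$.

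\textbf{Main obstacle: the converse case.} This is where $\argx \preceq^{T(\preceq)}_\graph \argy$ but $\argx \not\preceq \argy$, and Proposition~\ref{prop:not-reachable} does not literally cover it. I would handle it directly, with the same argument used to prove the proposition. Take any strength change $\delta_\graph$ with $\ddom(\delta_\graph) \subseteq M$. The graphs $\graph$ and $f(\graph,\delta)$ differ only on initial strengths of arguments in $M$, and no argument of $M$ reaches $T(\preceq)$. Deleting $M$ therefore yields the same restricted QBAG $\graph\downarrow_{\Args\setminus M} = f(\graph,\delta)\downarrow_{\Args\setminus M}$. By strong directionality (Lemma~\ref{lemma:strong-directionality}), applied twice, every $\argz \in T(\preceq)$ then satisfies $\fs_{f(\graph,\delta)}(\argz) = \fs_{\graph}(\argz)$. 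Hence $\preceq^{T(\preceq)}_{f(\graph,\delta)} = \preceq^{T(\preceq)}_\graph \neq \preceq$.

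One subtlety remains. Satisfaction quantifies over \emph{some} $S \subseteq \Args$, but $\preceq^S_{f(\graph,\delta)} = \preceq$ forces $S = T(\preceq)$. This holds because $\preceq$ is reflexive on its field (it is a preorder) and $\preceq^S$ is reflexive exactly on $S$; the degenerate case $\preceq = \emptyset$ is trivially satisfied with $S=\emptyset$ and so cannot arise here. Thus $f(\graph,\delta)$ fails to satisfy $\preceq$, so $\delta_\graph \notin SX^\fs_{\graph,M}(\preceq)$, and since $\delta_\graph$ was arbitrary, $SX^\fs_{\graph,M}(\preceq) = \emptyset$.

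This direct argument subsumes both cases, so I would present it as the proof. I would mention Proposition~\ref{prop:not-reachable} only as motivation, which is why the statement is labelled a corollary.
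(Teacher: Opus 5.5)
Your proof is correct, but it does not follow the paper's route, and the difference matters.

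\textbf{The paper's route and its gap.} The paper proves the corollary in one line by reduction to Proposition~\ref{prop:not-reachable}. It claims that non-satisfaction of $\preceq$ always yields a pair with $\argx \preceq \argy$ but $\fs_\graph(\argx) > \fs_\graph(\argy)$. You correctly spotted that this claim does not follow. $\graph$ may fail to satisfy $\preceq$ only because $\preceq^{T(\preceq)}_\graph$ contains \emph{extra} pairs. This happens when the desired ordering is strict, $\argx \prec \argy$ (as with the reflexive closure of $\{(\argb,\argc)\}$ in the running example), but currently $\fs_\graph(\argx) = \fs_\graph(\argy)$. It also happens when $\preceq$ is not total on $T(\preceq)$ and hence can never be satisfied. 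In such cases the hypothesis of Proposition~\ref{prop:not-reachable} is not met, so the paper's reduction has a gap.

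\textbf{Your route.} You apply strong directionality (Lemma~\ref{lemma:strong-directionality}) to both $\graph$ and $f(\graph,\delta)$, which coincide after deleting $M$. Hence the whole ordering $\preceq^{T(\preceq)}$ is invariant under any $\delta$ with $\ddom(\delta) \subseteq M$. Combined with your observation that $\preceq^S_{f(\graph,\delta)} = \preceq$ forces $S = T(\preceq)$, this handles every kind of violation uniformly. The paper's route buys brevity and the ``corollary'' label. Yours buys a complete proof, at the price of re-running the core of the proposition's argument.

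\textbf{One caveat, shared with the paper.} Deleting $M$ keeps every $\argz \in T(\preceq)$ in the restricted graph only if $M \cap T(\preceq) = \emptyset$. That holds only under the convention that every argument can reach itself via a path of length zero. Without this convention the statement itself is false. For example, under DF-QuAD take two isolated topic arguments, one of them mutable, whose desired strict order is currently reversed. Then changing the mutable argument's initial strength yields an SX. So you should make this reachability convention explicit at the point where you invoke strong directionality.
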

\begin{proof}
    As $\graph$ does not satisfy $\preceq$ w.r.t. $\fs$ it must hold that $\exists (\argx, \argy) \in \Args$ s.t.  $\argx \preceq \argy$ but $M$ cannot reach $\{\argx, \argy\}$ and for every $S \subseteq \Args$ it holds that $\argx \not \preceq^S_\graph~\argy$. Hence, the proof follows directly from Proposition~\ref{prop:not-reachable}.
\end{proof}

Assuming our gradual semantics satisfies weak monotonicity, we cannot find an SX, either, if two topic arguments whose relative final strengths need to \emph{inverse} are not mutable arguments and have the same attackers and supporters.
\begin{proposition}
    Given a semantics $\fs$ that satisfies weak monotonicity it holds that $SX_{\graph, M}^{\fs}(\preceq) = \emptyset$ if $\exists \argx, \argy \in \Args$ s.t.  $\argx \prec \argy$ but $\argx \not \preceq^\Args_\graph~\argy$, $\argx, \argy \not \in M$, and ${\cal R}^-(\argx) = {\cal R}^-(\argy)$, as well as ${\cal R}^+(\argx) = {\cal R}^+(\argy)$.
\end{proposition}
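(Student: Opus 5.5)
The plan is to argue by contradiction: assume some $\delta_\graph \in SX_{\graph, M}^{\fs}(\preceq)$ exists, set $\graph' := f(\graph, \delta)$, and show that $\graph'$ cannot satisfy $\preceq$.

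First I record what is unchanged. Strength changes only modify initial strengths (Definition~\ref{def:strength-change-application}), so $\graph'$ has the same arguments, attacks and supports as $\graph$. Hence ${\cal R}^-_{\graph'}(\argx) = {\cal R}^-_{\graph'}(\argy)$ and ${\cal R}^+_{\graph'}(\argx) = {\cal R}^+_{\graph'}(\argy)$ still hold. Moreover $\ddom(\delta) \subseteq M$ and $\argx, \argy \notin M$, so $\is_{\graph'}(\argx) = \is(\argx)$ and $\is_{\graph'}(\argy) = \is(\argy)$.

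Next I extract an initial-strength inequality from the original graph. From $\argx \not\preceq^\Args_\graph \argy$ we get $\fs_\graph(\argy) < \fs_\graph(\argx)$. Since $\argx$ and $\argy$ have equal attacker and supporter sets, both inclusions in Principle~\ref{sprinciple:smonotonicity} hold (in either direction). Condition 2 of weak monotonicity, applied to the pair $(\argx,\argy)$ in $\graph$, then gives $\is(\argy) < \is(\argx)$.

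Now I transfer this to $\graph'$. There $\is_{\graph'}(\argy) = \is(\argy) < \is(\argx) = \is_{\graph'}(\argx)$, so in particular $\is_{\graph'}(\argy) \leq \is_{\graph'}(\argx)$. Applying Condition 1 with the roles of $\argx$ and $\argy$ swapped, which is legitimate because the relation sets coincide, gives $\fs_{\graph'}(\argy) \leq \fs_{\graph'}(\argx)$. This means $\argy \preceq^S_{\graph'} \argx$ for any $S$ containing both arguments. But $\preceq$ requires $\argx \prec \argy$, i.e.\ $(\argy,\argx) \notin \preceq$.

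Finally I check that this really rules out satisfaction. Any $S$ with $\preceq^S_{\graph'} = \preceq$ must contain $\argx$ and $\argy$, since $(\argx,\argy) \in \preceq$ and the ordering only relates members of $S$. So such an $S$ would force $(\argy,\argx) \in \preceq$, contradicting $\argx \prec \argy$. Hence $\graph'$ does not satisfy $\preceq$, so $\delta_\graph$ is not an SX, and therefore $SX_{\graph,M}^\fs(\preceq) = \emptyset$.

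The main subtlety is the direction bookkeeping in weak monotonicity: we must use Condition 2 on $\graph$ and Condition 1 with $\argx,\argy$ swapped on $\graph'$. Both steps depend on the attacker and supporter sets being equal, not just nested. I also assume, as the paper does in this section, that all final strengths in $\graph'$ are defined.
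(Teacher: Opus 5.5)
Your proof is correct and takes essentially the same route as the paper's. Condition 2 of weak monotonicity on $\graph$ gives $\is(\argy) < \is(\argx)$, these initial strengths are untouched because $\argx, \argy \notin M$, and weak monotonicity applied again in $f(\graph,\delta)$ gives $\fs(\argy) \le \fs(\argx)$, which contradicts $\argx \prec \argy$. You are more explicit than the paper on two points: Condition 1 is applied with the roles swapped, which is legitimate because the relation sets coincide, and it is the strictness of $\argx \prec \argy$ (the paper's proof only writes $\argx \preceq \argy$) that rules out satisfaction.
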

\begin{proof}
    Consider $\argx, \argy \in \Args$ s.t. $\argx \preceq \argy$ but $\argx \not \preceq^\Args_\graph~\argy$, as well as $\argx, \argy \not \in M$ and ${\cal R}^-(\argx) = {\cal R}^-(\argy)$, as well as ${\cal R}^+(\argx) = {\cal R}^+(\argy)$ (as assumed in the proposition).
    This means that $\fs(\argx) > \fs(\argy)$ must hold (as implied by $\argx \not \preceq^\Args_\graph~\argy$).
    Because $\fs$ satisfies weak monotonicity, it must hold that $\is(\argx) > \is(\argy)$ (as $\argx$ and $\argy$ share all attackers and supporters).
    Consequently, for any $\delta_\graph$ s.t. $\{\argz\mid(\argz, s) \in \delta_\graph\} \subseteq M$ it must hold for $\graph' := f(\graph, \delta)$ that $\fs_{\graph'}(\argx) \geq \fs_{\graph'}(\argy)$ (otherwise, we would again violate weak monotonicity).
    Therefore, it holds that $\delta_\graph \not \in SX_{\graph, M}^{\fs}(\preceq)$ and thus $SX_{\graph, M}^{\fs}(\preceq) = \emptyset$.
\end{proof}

We now move on to some cases where we can guarantee that SXs exist.
First, if all topic arguments are mutable and have neither attackers nor supporters, we can achieve the desired ordering by
modifying their initial strengths directly, assuming our semantics satisfies stability.
\begin{proposition}\label{prop:topic-arguments-mutable}
Given a gradual semantics $\fs$ satisfying stability, \\ $SX_{\graph,T}^{\fs}(\preceq) \neq \emptyset$ if $\forall \argx \in T(\preceq)$ it holds that ${\cal R}^-(\argx) = {\cal R}^+(\argx) = \emptyset$ and $\argx \in M$.
\end{proposition}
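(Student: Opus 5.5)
The plan is to construct an explicit strength change on the topic arguments that realises the desired ordering directly in their initial strengths, and then use stability to carry those initial strengths over to the final strengths. Throughout I assume, as is implicit in the notion of satisfaction (Definition~\ref{def:fs-ordering}), that $\preceq$ is a total preorder on $T(\preceq)$. If it were not, no QBAG could satisfy it, because every $\preceq^S_{\graph,\fs}$ is a total preorder on $S$. I also assume that $\interval$ is a non-degenerate real interval, so it contains as many distinct values as we need.

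\textbf{Step 1: encode the ordering as values.} Write the total preorder $\preceq$ on $T := T(\preceq)$ as a sequence of equivalence classes $\langle C_1, \dots, C_k \rangle$, in the style of Example~\ref{ex:final-strength-ordering}, with $C_i$ strictly below $C_j$ whenever $i<j$. Choose values $v_1 < \dots < v_k$ in $\interval$.

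\textbf{Step 2: define the strength change.} For each $\argx \in C_i$, set $\delta_\graph(\argx) := v_i$ when $v_i \neq \is(\argx)$, and leave $\delta_\graph(\argx)$ undefined otherwise. This respects the requirement in Definition~\ref{def:strength-change} that $\delta_\graph(\argx) \neq \is(\argx)$ on its defined domain. Moreover $\ddom(\delta_\graph) \subseteq T \subseteq M$, which is admissible both for the mutable set $T$ in the subscript and for $M$ as stated. Let $\graph' := f(\graph,\delta)$. By Definition~\ref{def:strength-change-application}, $\is_{\graph'}(\argx) = v_i$ for every $\argx \in C_i$.

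\textbf{Step 3: transfer to final strengths and verify.} Applying a strength change alters only $\is$, so $\Att$ and $\Supp$ are unchanged. Hence every $\argx \in T$ still has ${\cal R}^-_{\graph'}(\argx) = {\cal R}^+_{\graph'}(\argx) = \emptyset$, and stability (Principle~\ref{sprinciple:stability}) gives $\fs_{\graph'}(\argx) = \is_{\graph'}(\argx) = v_i$. Therefore, for $\argx \in C_i$ and $\argy \in C_j$, we have $\fs_{\graph'}(\argx) \le \fs_{\graph'}(\argy)$ iff $i \le j$ iff $\argx \preceq \argy$. So $\preceq^{T}_{\graph',\fs} = \preceq$, and $\graph'$ satisfies $\preceq$ with $S = T$. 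This shows $\delta_\graph$ is an SX, and the set of SXs is non-empty. As a sanity check, if $\graph$ already satisfies $\preceq$, the construction may return $\delta_\emptyset$, which is consistent with Lemma~\ref{lemma:empty-is-x}.

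The main obstacle is not computational but lies in the implicit hypotheses: the statement only makes sense if $\preceq$ is a total preorder on $T(\preceq)$ and $\interval$ admits $k$ distinct values. I would state both assumptions explicitly at the start of the proof. I would also handle the bookkeeping in Step 2 about omitting arguments whose target value coincides with their current initial strength, so that $\delta_\graph$ is formally a strength change.
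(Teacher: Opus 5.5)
Your proof is correct and follows essentially the same route as the paper: you encode the desired total preorder directly as new initial strengths of the topic arguments, which are mutable and have no attackers or supporters, and use stability to carry these values unchanged to the final strengths. Your extra bookkeeping tightens points the paper's proof passes over silently: leaving $\delta_\graph$ undefined where the target value already equals $\is(\argx)$, choosing values inside $\interval$ rather than $\mathbb{R}$, stating that $\preceq$ must be a total preorder on $T(\preceq)$, and covering both readings of the mutable set ($T$ in the subscript versus $M$ in the hypothesis).
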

\begin{proof}
Because $\fs$ satisfies stability and $\forall \argx \in T(\preceq)$ it holds that ${\cal R}^-(\argx) = {\cal R}^+(\argx) = \emptyset$, for every $\graph' = (\Args, \is', \Supp, \Att)$ for every initial strength function $\is'$ it must hold that $\is'(\argx) = \fs_\graph'(\argx)$.
We can hence achieve a mapping $\delta_\graph: T(\preceq) \rightarrow \mathbb{R}$ s.t. $\forall \argy,\argz \in T(\preceq)$ it holds that $\delta_\graph(\argy) \leq \delta_\graph(\argz)$ iff $\argy \preceq \argz$, thus achieving that $f(\graph, \delta)$ satisfies $\preceq$; then, by definition of an SX (Definition~\ref{def:fscx}), it must hold that $\delta_\graph \in SX_{\graph, M}^{\fs}(\preceq)$, i.e., $SX_{\graph, M}^{\fs}(\preceq) \neq \emptyset$; intuitively: as all topic arguments are mutable arguments without external influence, we can change their initial strengths directly s.t. we achieve the desired ordering $\preceq$.
\end{proof}
Similarly, we can guarantee the existence of SXs if all topic arguments are mutable arguments that cannot reach each other and we can achieve zero influence of all incoming attackers and supporters, by changing mutable arguments other than the topic arguments.
\begin{proposition}
Given a gradual semantics $\fs$ satisfying balance,  it holds that $SX_{\graph, M}^{\fs}(\preceq) \neq \emptyset$ if $T(\preceq) \subseteq M$, $\forall \argx, \argy \in T(\preceq)$ s.t. $\argx \neq \argy$ it holds that $\argx$ cannot reach $\argy$, and there exists a strength change $\delta_\graph$ s.t. $ddom(\delta_\graph) \subseteq M \setminus T(\preceq)$ and $\forall \argz \in \Args$ s.t. $\exists \argx \in T(\preceq): \argz \in {\cal R^-(\argx)} \cup {\cal R^+(\argx)}$ it holds that $\fs_{f(\delta, \graph)}(\argz) = 0$.
\end{proposition}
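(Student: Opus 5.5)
The plan is a two-phase construction: first neutralise the parents of the topic arguments using the hypothesised strength change $\delta_\graph$, and then set the initial strengths of the topic arguments directly, as in the proof of Proposition~\ref{prop:topic-arguments-mutable}.

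\textbf{Step 1: neutralise the parents.} Let $\delta_\graph$ be the strength change from the hypothesis, with $\ddom(\delta_\graph) \subseteq M \setminus T(\preceq)$. Every attacker or supporter $\argz$ of a topic argument then has $\fs_{f(\graph,\delta)}(\argz) = 0$. Consequently, for each $\argx \in T(\preceq)$, the multisets of attacker strengths and supporter strengths in $f(\graph,\delta)$ both consist only of zeros. These multisets need not be equal, since their cardinalities may differ, so balance cannot be applied verbatim. The intended reading is that zero-strength parents exert no influence, and I would invoke balance in the form \emph{parents of strength $0$ can be disregarded}, concluding that $\fs(\argx) = \is(\argx)$ for every topic argument.

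\textbf{Step 2: fix the topic strengths.} Extend $\delta_\graph$ to $\delta'_\graph$ by choosing new initial strengths $\delta'_\graph(\argx)$ for $\argx \in T(\preceq)$ so that $\delta'_\graph(\argy) \leq \delta'_\graph(\argz)$ iff $\argy \preceq \argz$. This is possible because $\preceq$ is a preorder on the finite set $T(\preceq)$; I would assume it is total there, as Proposition~\ref{prop:topic-arguments-mutable} implicitly does, and map its equivalence classes to increasing values in $\interval$ that avoid the current initial strengths. Since $T(\preceq) \subseteq M$, we have $\ddom(\delta'_\graph) \subseteq M$.

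\textbf{Step 3: the parents remain neutral (the main obstacle).} We must show that changing the topic arguments' initial strengths does not disturb the zero final strengths of their parents. Suppose some parent $\argz$ of a topic argument $\argx$ is reachable from a topic argument $\argy$. If $\argy = \argx$, there is a cycle through $\argx$. If $\argy \neq \argx$, then $\argy$ reaches $\argz$, which reaches $\argx$, contradicting the hypothesis that distinct topic arguments cannot reach each other. So the only problematic case is a parent reachable from its own topic child via a cycle. Here I would either add an acyclicity assumption or argue that the hypothesis should be read for the final graph. Otherwise, no modified topic argument reaches $\argz$. Applying strong directionality (Lemma~\ref{lemma:strong-directionality}) to the subgraph of ancestors of $\argz$, which contains no topic arguments and on which $f(\graph,\delta)$ and $f(\graph,\delta')$ coincide, gives $\fs_{f(\graph,\delta')}(\argz) = \fs_{f(\graph,\delta)}(\argz) = 0$.

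\textbf{Conclusion.} By Step 1 applied in $f(\graph,\delta')$, each topic argument has final strength equal to its new initial strength, so $f(\graph,\delta')$ satisfies $\preceq$ on $T(\preceq)$. Hence $\delta'_\graph \in SX^\fs_{\graph,M}(\preceq)$ by Definition~\ref{def:fscx}, and the set is nonempty.
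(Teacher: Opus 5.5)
Your proposal follows the paper's own proof: use the given $\delta_\graph$ to drive the topic arguments' parents to final strength $0$, get $\fs(\argx) = \is(\argx)$ on $T(\preceq)$ from balance, and then set the topic arguments' initial strengths to realise the (total) preorder as in Proposition~\ref{prop:topic-arguments-mutable}. The paper simply asserts both the balance step and that the parents stay at $0$ after the second change, so your Step~3 supplies a justification the paper omits, and the caveats you raise are real missing hypotheses of the statement rather than artefacts of your argument (e.g., under DF-QuAD, take a topic argument $\argx$ with $\is(\argx) = 0$ in a mutual-support cycle with an immutable $\argz$ of initial strength $0$; every hypothesis holds via the empty strength change, yet $\argx$'s final strength is $0$ or $1$ for every new value of $\is(\argx)$, so it can never be placed strictly between two isolated topic arguments).
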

\begin{proof}
Because $\fs$ satisfies balance and there exists a strength change $\delta_\graph$ s.t. $\forall \argz \in \Args$ s.t. $\exists \argx \in T(\preceq): \argz \in {\cal R^-(\argx)} \cup {\cal R^+(\argx)}$ it holds that $\fs_{f(\delta, \graph)}(\argz) = 0$, for this strength change $\delta_\graph$, $\forall \argx \in T(\preceq)$ it also holds that $\fs_{f(\delta, \graph)}(\argx) = \is(\argx)$.
Because it also holds that $ddom(\delta_\graph) \subseteq M \setminus T(\preceq)$, we can apply another strength change $\delta'_\graph$ s.t. $\forall \argx \in T(\preceq)$ it still holds that $\fs_{f(\delta', f(\delta, \graph))}(\argx) = \is(\argx)$ and in addition such that any total preorder $\preceq$ (on $T(\preceq)$, obviously), can be achieved by assigning final strengths (in $\mathbb{R}$) to all arguments in $T(\preceq)$ accordingly, analogous to how we can achieve this for Proposition~\ref{prop:topic-arguments-mutable}.
\end{proof}
%

\section{Heuristic Search}
\label{sec:application}
%

\paragraph{Experimental Setups}
We conduct experiments on \emph{layered} acyclic QBAGs that we call \emph{MLP-like QBAGs} because they feature a feed-forward structure like Multi-Layer Perceptrons (MLPs).
In a layered QBAG, arguments can be partitioned into layers, such that only (and all) arguments in the first layer have no parents; arguments in the second layer then have parents only in the first layer and children in the third layer, and so forth. Only (and all) nodes in the final layer do not have children.
Layered argumentation graphs are common in applications of CA, both generally~\cite{DBLP:conf/comma/XiaOBL24,DBLP:journals/corr/abs-2409-05457,DBLP:conf/tafa/MaioS11}  and specifically for weighted argumentation variants such as QBAGs~\cite{ayoobi2024argumentative,cocarascu2019extracting,RAGO2021103506}.

While QBAGs in many real-world scenarios are naturally acyclic, extending our approach to cyclic QBAGs remains future work. Since there are no public benchmark datasets for QBAGs, we use synthetic graphs to evaluate the performance.
We next distinguish two MLP types. The first type is randomly generated and may not exhibit a solution. 
The second type is constructed with additional constraints to guarantee the existence of a solution and we refer to them as \emph{constrained QBAGs}.
For both types, we consider four different structures that vary in the number of layers and arguments per layer: $[8,32,16,3]$, $[8,32,16,8]$, $[8,64,16,8,3]$ and $[8,64,16,8,8]$. 
For example, $[8,32,16,3]$ represents 8 arguments in the 
first layer, 32 in the second, 16 in the third, and 3 in the final layer. We refer to the layers between the first and the last as \emph{intermediate layers}. Arguments are assigned random base scores uniformly sampled from $[0,1]$.
Edges are added between all arguments in adjacent layers, making the QBAGs fully connected between consecutive layers.
Each edge is independently labelled as either attack or support with equal probability. 
The topic arguments are set as those in the final layer, and the desired ordering follows the decreasing strengths of these arguments.
To reduce the effect of randomness, we create 100 QBAGs for each structure. Finally, we use DF-QuAD semantics for evaluation due to its wide applicability (cf. \cite{cocarascu2019extracting,kotonya2019gradual}).

In the constrained QBAG setting, the same structural templates are reused but with additional constraints to ensure that the SXs are guaranteed to exist.
Let the layers be denoted by $L_1$ (the fist layer) to $L_n$ (the final layer). Similarly, we set arguments in $L_n$ as the topic arguments, which are mutually independent of each other.
Our focus is on decreasing the strength of arguments in layer $L_{n-1}$ to $0$, so that they have no influence on the final layer. Then, a valid SX can be obtained if the algorithm successfully identifies a decreasing ordering of base scores in $L_n$.
To this end, we make the arguments in $L_{n-1}$ immutable and assign them with small random base scores uniformly sampled from $[0,0.1]$, so that their strength can be more easily decreased to 0 by attackers from $L_{n-2}$. 
Furthermore, we enforce that $L_{n-2}$ contains only attack relations targeting $L_{n-1}$, and $L_{n-3}$ contains only support relations targeting $L_{n-2}$. This design ensures that the strengths of arguments in $L_{n-2}$ can be maximised through supports from $L_{n-3}$, enabling them to strongly attack and minimise the arguments in $L_{n-1}$.

\paragraph{Objective Function and Optimisation Setups}
Suppose the final layer arguments are denoted by $a_1, a_2, \cdots, a_n (n>1)$ with a desired ordering $\sigma(a_1) \geq \sigma(a_2) \geq \cdots \geq \sigma(a_n)$. To find an ordering by local search, we need an objective function that decreases with respect to the number of order-constraint violations. 
We adopt the ReLU cost function 
$cost(\sigma) := \sum_{i<j}max(0,(\sigma(a_j)-\sigma(a_i)))$.

We employ the gradient descent algorithm with Adam optimiser to minimise the cost, with a maximum of $100$ iterations.
While it may converge to local minima, it can serve here as a proof of concept enabling our heuristic search.
To evaluate the optimisation results, we first check \emph{validity}, i.e., whether the final ranking exactly matches the desired ordering.
Additionally, we employ two standard ranking correlation metrics: \emph{Kendall's $\tau$} and \emph{Spearman's $\rho$} ranking correlation (cf. \cite{rankingloss2}). 
These metrics directly capture ordering quality.
Both metrics range from $-1$ (reverse order) to $1$ (equal order), with higher scores indicating better alignment with the desired ordering. 
We also report the average runtime across all QBAGs, as well as the average absolute base score difference (per argument) for those QBAGs for which we successfully identify the desired ordering.

Algorithm~\ref{algo_heuristic_search} illustrates the iterative heuristic search, which consists of three main steps. Since a valid solution may not always exist, a maximum number of iterations is set to prevent infinite loops. First, the algorithm computes the ReLU cost. If the cost equals $0$, indicating that a valid solution has been found, the algorithm returns the base score function; otherwise, it proceeds to the next step. Second, the gradients of the cost function w.r.t. each mutable argument is computed and stored. Finally, the base scores are updated based on their corresponding gradients, with the dynamic learning rate $\alpha$ provided by the Adam optimiser.
If no solution is found, we return $\texttt{null}$.

Let us formally observe the time complexity of Algorithm~\ref{algo_heuristic_search}.
\begin{restatable}{lemma}{lemmacomplexity}
\label{proposition_iterative_algo}
Let $n$ be the number of topic arguments, $K$ the maximum number of iterations, and $N=|Args|+|Att|+|Supp|$. For acyclic QBAGs, the time complexity of Algorithm~\ref{algo_heuristic_search} is $\mathcal{O}(K \cdot (|M| \cdot N + n^2))$. 
\end{restatable}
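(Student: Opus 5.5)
The bound follows by costing one iteration of the main loop of Algorithm~\ref{algo_heuristic_search} and multiplying by the iteration cap $K$. Each iteration does three things: it computes the final strengths and the ReLU cost, it computes the gradient of the cost with respect to each mutable argument, and it applies an Adam update to each mutable base score. I will bound these three steps separately, show that together they cost $\mathcal{O}(|M|\cdot N + n^2)$, and conclude the total $\mathcal{O}(K\cdot(|M|\cdot N+n^2))$. The final return statement contributes only lower-order terms.

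\textbf{Forward pass and cost.} For an acyclic QBAG, a modular semantics such as DF-QuAD needs no iteration to convergence. Evaluating arguments in topological order computes every final strength exactly once. The topological sort takes $\mathcal{O}(|\Args|+|\Att|+|\Supp|)=\mathcal{O}(N)$ time. Each argument's aggregation (Sum or Product) and influence function take time linear in its number of parents, so the forward pass is $\mathcal{O}(N)$ in total. The ReLU cost $\sum_{i<j}\max(0,\sigma(a_j)-\sigma(a_i))$ ranges over $\binom{n}{2}$ pairs of topic arguments, which is $\mathcal{O}(n^2)$.

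\textbf{Gradient and update.} Treat the partial derivative of the cost with respect to each mutable argument $m\in M$ separately. For each $m$, propagate $\partial\sigma(\argx)/\partial\tau(m)$ forward in topological order by the chain rule through the aggregation and influence functions. Each edge is visited once, so this costs $\mathcal{O}(N)$ per $m$ and $\mathcal{O}(|M|\cdot N)$ for all of them. Combining these derivatives into $\partial\,\mathit{cost}/\partial\tau(m)$ requires, for each pair $i<j$, knowing whether the ReLU is active. Those indicators depend only on the forward pass, so they can be computed once in $\mathcal{O}(n^2)$. Then, for each $a_i$, precompute a signed coefficient $c_i$ equal to the number of active pairs in which $a_i$ is subtracted minus the number in which it is added, again in $\mathcal{O}(n^2)$ total. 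Each $m$ then needs only $\mathcal{O}(n)$ further work, which is dominated by $\mathcal{O}(N)$ since the topic arguments are themselves arguments. The Adam step updates $|M|$ scalars in $\mathcal{O}(|M|)$ time.

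\textbf{Main obstacle.} The delicate point is the gradient step. Done naively, looping over all $\mathcal{O}(n^2)$ pairs separately for each $m$ would give $\mathcal{O}(|M|\cdot n^2)$, which is not within the claimed bound. The precomputation of the $c_i$ above is exactly what avoids this. If the gradient is instead obtained by reverse-mode automatic differentiation, one backward pass costs only $\mathcal{O}(N+n^2)$, which is also within the bound. I will argue the per-argument forward-propagation version, because it matches the $|M|\cdot N$ term in the statement.

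Summing the three steps gives $\mathcal{O}(N+n^2+|M|\cdot N+|M|)=\mathcal{O}(|M|\cdot N+n^2)$ per iteration, and hence $\mathcal{O}(K\cdot(|M|\cdot N+n^2))$ over at most $K$ iterations.
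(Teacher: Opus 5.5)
Your bounds for the forward pass ($\mathcal{O}(N)$ by topological evaluation) and for a single cost evaluation ($\mathcal{O}(n^2)$) match the paper's proof. Your gradient step, however, bounds a procedure that Algorithm~\ref{algo_heuristic_search} does not run. Step~2 of the algorithm computes gradients by finite differences. For each $\arga \in M$ it perturbs $\is(\arga)$ by $\varepsilon$, recomputes the final strengths of all arguments, recomputes $cost'$ from scratch, and forms a difference quotient. The paper's $|M|\cdot N$ term is exactly these $|M|$ full $\mathcal{O}(N)$ re-evaluations of the strengths. No chain-rule propagation is involved, and none is needed. Your coefficients $c_i$ are not part of the algorithm, and they do not carry over to it as an exact shortcut either. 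They linearise the ReLU cost around the current set of active pairs, whereas the algorithm evaluates the cost at the perturbed strengths, where for a finite $\varepsilon$ that set can change. As written, then, you have a correct bound for a forward-mode analytic-gradient variant, not a proof about Algorithm~\ref{algo_heuristic_search}. (Your reverse-mode remark shows such a variant even achieves $\mathcal{O}(N+n^2)$ per iteration.)

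The obstacle you flagged is nevertheless real for the algorithm as written, and the paper's proof does not resolve it. The paper charges the $\mathcal{O}(n^2)$ cost evaluation once per iteration, although step~2 repeats it after each of the $|M|$ perturbations. A literal count gives $\mathcal{O}(K\cdot|M|\cdot(N+n^2))$. This coincides with the stated bound whenever $n^2=\mathcal{O}(N)$, for instance in the fully connected layered QBAGs of the experiments, where every topic argument has at least $n$ parents. It does not hold in general. With $n$ mutable topic arguments and no edges, $N=|M|=n$, and the recomputations of $cost'$ alone take $\Theta(n^3)$ per iteration against the claimed $\mathcal{O}(n^2)$. To prove the statement about Algorithm~\ref{algo_heuristic_search}, analyse step~2 directly and make an assumption such as $n^2=\mathcal{O}(N)$ explicit; each of the $|M|$ perturbation passes then costs $\mathcal{O}(N+n^2)=\mathcal{O}(N)$. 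Alternatively, say up front that you are bounding the analytic-gradient version that the paper's remark on differentiability alludes to, in which case your argument stands.
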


We have seen that SXs may or may not exist. Deciding their existence and finding an optimal SX are challenging problems:
Even in acyclic, layered QBAGs, the impact of one argument's initial strength on the final strength of another argument may not be monotonic. E.g., increasing an argument's initial strength by a value of, assume, 0.1, may have a positive impact on the final strength of another argument and further increasing the initial strength (e.g. by 0.11 instead of just 0.1) may then have a negative impact (cf. Figure 1 in~\cite{DBLP:journals/ijar/KampikPYCT24}).
To address this challenge pragmatically, we design and implement a local search (gradient descent) algorithm
that tries to find SX by minimising the violation of order constraints.
Strictly speaking, since our algorithm takes gradient information into account, it can only be applied if the strength
function is differentiable. While this is the case for acyclic QBAGs, the strength function for cyclic QBAGs is not
necessarily differentiable, and even if it was, it would be difficult to derive a closed-form solution for the
partial derivatives. In principle, one could replace the partial derivatives with difference quotients in this case.
Still, since the majority of QBAG applications results in acyclic graphs, we focus on this case.

\begin{algorithm}[tb]
    \caption{Heuristic Search}
    \small
    \label{algo_heuristic_search}
    \begin{flushleft}
    \textbf{Input}: QBAG $\graph = \QBAG$, semantics $\sigma$, learning rate $\alpha$, mutable set $M \subseteq \Args$, desired ordering $\sigma(\arga_1) \geq \cdots \geq \sigma(\arga_n)$ \\
    \textbf{Parameter}: Perturbation value $\varepsilon$, maximum iterations $K$ \\
    \textbf{Output}: Updated $\tau$ (which also is an SX)
    \end{flushleft}
    \begin{algorithmic}[1]
        \STATE $\nabla cost = \{  \}$ \hfill \# \textit{gradient dictionary}
        \FOR{$k = 1$ to $K$}
            \STATE \# \textit{1. compute cost}
            \STATE compute $\sigma(\arga)$ for all $\arga \in Args$ \hfill 
            \STATE $cost \leftarrow \sum_{1 \leq i<j \leq n} \max(0, \sigma(\arga_j)-\sigma(\arga_i))$ 
            \IF{$cost = 0$} 
                \STATE \textbf{return} $\tau$ \hfill \# \textit{solution found}
            \ENDIF
            \STATE \
            \STATE \# \textit{2. compute gradients}
            \FOR{$\arga$ in $M$} 
            \STATE $\tau(\arga) \leftarrow \tau(\arga) + \varepsilon$ \hfill \# \textit{perturb} $\tau(\arga)$
            \STATE compute $\sigma(\arga)$ for all $\arga \in Args$
            \STATE $cost' \leftarrow \sum_{1 \leq i<j \leq n} \max(0, \sigma(\arga_j)-\sigma(\arga_i))$
            \STATE $\nabla cost[\arga] \leftarrow (cost'-cost)/\varepsilon$
            \STATE $\tau(\arga) \leftarrow \tau(\arga) - \varepsilon$ \hfill \# \textit{restore} $\tau(\arga)$
            \ENDFOR
            \STATE \
            \STATE \# \textit{3. update base scores}
            \FOR{$\arga$ in $M$}
                \STATE $\tau(\arga) \leftarrow \max(0, \min(1, \tau(\arga) - \alpha \cdot \nabla cost[\arga]))$
            \ENDFOR
        \ENDFOR
        \STATE \textbf{return} $\texttt{null} \ \#$ \textit{No solution found}
    \end{algorithmic}
\end{algorithm}

\paragraph{Results and Analysis}
Table~\ref{tab_results} shows the experimental results.
The third column shows the results for the constrained QBAGs. 
Our algorithm consistently finds SXs, achieving $100\%$ validity, which results in the best Kendall and Spearman correlation. The runtime
increases with both the number of intermediate layers and the number of topic arguments involved in the desired ordering. 
The average absolute base score differences are larger for those QBAGs with more topic arguments.

The remaining three columns show results of the random QBAGs under different configurations of mutable arguments. Theoretically, if all arguments are mutable, SXs always exist
by directly assigning decreasing base scores to the topic arguments and zero to all others, thereby nullifying any undesired influence. Our experimental results (shown in the last column) confirm that the algorithm reliably identifies SXs under this condition. Although the validity reaches 99\% for the final configuration, the algorithm successfully could find a solution for the previously failed case after increasing the number of iterations to $1000$.
However, in cases with only partially mutable arguments, our algorithm does not always succeed. This may be attributed to several possible factors: SXs may not exist, the number of iterations may be insufficient, or the algorithm may have converged to a local minimum, which is a known limitation of gradient-based methods.
Despite these challenges, we observe a clear trend: as the number of mutable arguments increases---from only first layer mutable, to intermediate layers mutable, and finally to all layers mutable---the validity, Kendall, and Spearman correlation scores improve consistently, which aligns with our expectation\footnote{Note: the technical appendix contains results with an additional experimental setting where both first and intermediate layers are mutable; we also report results for EB and QE semantics.}.
As for the absolute base score difference, we observe that configurations with more topic arguments require larger adjustments when the number of mutable arguments is fixed.

Our experiments demonstrate that, while the general problem is challenging, our algorithm can reliably identify SXs in some scenarios where we can guarantee the existence of SXs.
Accordingly, future research towards more applied directions, e.g. by utilising SXs for MLP debugging, can be considered promising.

\begin{table}[t]
\centering

\caption{Average validity, Kendall \& Spearman correlation, runtime (in seconds), and absolute base score difference (per argument) over 100 MLP-like QBAGs with varying structures.}
\footnotesize{
\begin{tabular}{llcccc}
\toprule
\textbf{Structure} & \textbf{Metric} 
& \makecell{\textbf{Constrained} \\ \textbf{$L_{n-1}$ fixed}} 
& \makecell{\textbf{First} \\ \textbf{mutable}} 
& \makecell{\textbf{Interm.} \\ \textbf{mutable}} 
& \makecell{\textbf{All} \\ \textbf{mutable}} \\
\midrule
\multirow{5}{*}{\textbf{[8,32,16,3]}} 
& Validity        & \textbf{100\%} & 0\%      & 83\%	 & \textbf{100\%} \\
& Kendall         & 1.00 		   & -0.24    & 0.78    & 1.00 \\
& Spearman        & 1.00 		   & -0.24    & 0.78    & 1.00 \\
& Runtime         & 0.03 		   & 1.09     & 0.28    & 0.03 \\
& $|\Delta$ BS$|$ & 0.01 		   & NA       & 0.30    & 0.15 \\

\midrule
\multirow{5}{*}{\textbf{[8,32,16,8]}} 
& Validity        & \textbf{100\%} & 0\%      & 33\%    & \textbf{100\%} \\
& Kendall   	  & 1.00 		   & -0.02    & 0.62    & 1.00 \\
& Spearman  	  & 1.00 		   & -0.03    & 0.68    & 1.00 \\
& Runtime   	  & 0.11 		   & 7.46     & 1.01    & 0.11 \\
& $|\Delta$ BS$|$ & 0.06 		   & NA       & 0.39    & 0.27 \\

\midrule
\multirow{5}{*}{\textbf{[8,64,16,8,3]}} 
& Validity        & \textbf{100\%} & 3\%      & 87\%   & \textbf{100\%} \\
& Kendall    	  & 1.00 		   & -0.19    & 0.89   & 1.00 \\
& Spearman  	  & 1.00 		   & -0.20    & 0.90   & 1.00 \\
& Runtime         & 0.08 		   & 3.57     & 0.70   & 0.08 \\
& $|\Delta$ BS$|$ & 0.01 		   & $\sim0$  & 0.08   & 0.04 \\
\midrule
\multirow{5}{*}{\textbf{[8,64,16,8,8]}} 
& Validity        & \textbf{100\%} & 0\%      & 24\%   & 99\% \\
& Kendall         & 1.00 		   & 0.02 	  & 0.54   & 0.99 \\
& Spearman        & 1.00 		   & 0.03 	  & 0.61   & 0.99 \\
& Runtime         & 0.34 		   & 3.99 	  & 3.19   & 0.40 \\
& $|\Delta$ BS$|$ & 0.03 		   & NA  	  & 0.12   & 0.10 \\
\bottomrule
\end{tabular}}
\label{tab_results}
\end{table}

\section{Relating SXs to Inverse \& Counterfactual Problems}
\label{sec:characterizations}
SXs are closely related to the \emph{inverse problem} as introduced by~\cite{DBLP:conf/ijcai/OrenYVB22}, as well as to the related \emph{strong counterfactual problem}~\cite{yin2024qarg} that is defined as a stepping stone to counterfactual explanations for QBAGs.
In this section, we will show the following:
\begin{enumerate*}[label=(\roman*)]
    \item Every solution of an inverse problem is also an SX; note that the reverse is not the case as SXs can specify specific sets of topic and mutable arguments, and can start off with arbitrary initial strengths assignments;
    \item Strong counterfactual problems and their solutions can be reduced to SXs; again the reverse is not the case, as SXs cover preferences over arbitrary many arguments in a QBAG.
\end{enumerate*}

To be able to formally integrate our explanations into the body of related work, we introduce some additional definitions, starting with the \emph{inverse problem}, that given an argumentation framework without initial strengths seeks to identify an initial strength assignment that achieves a desired final strength-based ordering of the arguments.
Note that \cite{DBLP:conf/ijcai/OrenYVB22} defines the inverse problem for attack-only instead of bipolar argumentation frameworks and semantics; for the sake of conciseness, we generalise immediately to QBAGs.
\begin{definition}[Inverse Problem]
    An \emph{inverse problem} with respect to a gradual semantics $\fs$ is a 4-tuple $I = (\Args, \Att, \Supp, \preceq)$, where $\Args$ is a set of arguments, $\Att, \Supp \subseteq \Args \times \Args$, and $\preceq \subseteq \Args \times \Args$. $\preceq$ is called the \emph{desired ordering}.
    A solution of the inverse problem $I$ is an initial strength function $\is: \Args \rightarrow \interval$ s.t. $\{(\argx, \argy) \mid \argx, \argy \in \Args, \fs(\argx) \leq \fs(\argy) \} = \preceq$.
\end{definition}
We denote the class of inverse problems by $\cal I$.

A somewhat similar problem has been introduced as a prerequisite of an argumentation-based XAI approach. The \emph{strong counterfactual problem} describes, given a QBAG and a topic argument of that QBAG, the identification of an initial strength function that achieves a specific desired final strength of the topic argument~\cite{yin2024qarg}.
\begin{definition}[Strong Counterfactual Problem]
    The strong counterfactual problem with respect to an argumentation semantics $\fs$ is a 3-tuple $C = (\graph, \argx, s)$, where $\graph = \QBAG$ is a QBAG, $\argx \in \Args$, $s \in \interval$ as well as $s \neq \fs_{\graph}(\argx)$.
    The solution of the strong counterfactual problem $C$ is an initial strength function $\is' \neq \is$ such that, given $\graph' = (\Args, \is', \Att, \Supp)$, it holds that $\fs_{\graph'}(\argx) = s$.
\end{definition}
The following example illustrates the two \emph{problems}.
\begin{example}\label{ex:problems}
Consider $\graph$ in Figure~\ref{fig:inverse-problem}\footnote{Here, $\graph$ is technically not a QBAG.}, with arguments $\Args = \{\arga, \argb, \argc, \argd, \arge\}$, $\Att = \{(\arga, \argb), (\argd, \arge)\}$, and $\Supp = \{(\arga, \argc), (\argd, \arga), (\arge, \argc)\}$. With the sequence $\preceq^* = \langle \argd, \arge, \arga, \argb, \argc \rangle$ 
giving rise to the corresponding total order $\preceq$\footnote{I.e., $\preceq$ is the transitive and reflexive closure of $\{(\argd, \arge), (\arge, \arga), (\arga, \argb), (\argb, \argc)\}$.
}, we have the inverse problem $I = (\Args, \Att, \Supp, \preceq)$.
Given $\graph^*$ in Figure~\ref{fig:counterfactual-problem}, $C = (G^*, \argc, 6)$ is a strong counterfactual problem.
The initial strength function seen in $\graph'$ (Figure~\ref{fig:inverse-solution}), i.e., $\is' = \{(\arga, 2), (\argb, 8), (\argc, 1), (\argd,1 ), (\arge, 3)\}$, is a solution of $I$, and of $C$.

\vspace{-10pt}
\begin{figure}[!ht]
    \subfloat[$\graph$]{
    \label{fig:inverse-problem}
    \begin{tikzpicture}[scale=0.55]
        \node[unanode]    (a)    at(0,2)  {\pargnode{\arga}};
        \node[unanode]  (b)    at(0,0)  {\pargnode{\argb}};
        \node[unanode]    (c)    at(2,0)  {\pargnode{\argc}};
        \node[unanode]    (d)    at(1,4)  {\pargnode{\argd}};
        \node[unanode]    (e)    at(3,2)  {\pargnode{\arge}};
        
        \path [->, line width=0.5mm]  (a) edge node[left] {-} (b);
        \path [->, line width=0.5mm]  (a) edge node[above] {+} (c);
        \path [->, line width=0.5mm]  (e) edge node[right] {+} (c);
        \path [->, line width=0.5mm]  (d) edge node[left] {+} (a);
        \path [->, line width=0.5mm]  (d) edge node[right] {-} (e);
    \end{tikzpicture}
    }
    \centering
    \subfloat[$\graph^*$]{
    \label{fig:counterfactual-problem}
    \begin{tikzpicture}[scale=0.55]
        \node[unanode]    (a)    at(0,2)  {\argnode{\arga}{1}{2}};
        \node[unanode]  (b)    at(0,0)  {\argnode{\argb}{8}{6}};
        \node[unanode]    (c)    at(2,0)  {\argnode{\argc}{1}{4}};
        \node[unanode]    (d)    at(1,4)  {\argnode{\argd}{1}{1}};
        \node[unanode]    (e)    at(3,2)  {\argnode{\arge}{2}{1}};
        
        \path [->, line width=0.5mm]  (a) edge node[left] {-} (b);
        \path [->, line width=0.5mm]  (a) edge node[above] {+} (c);
        \path [->, line width=0.5mm]  (e) edge node[right] {+} (c);
        \path [->, line width=0.5mm]  (d) edge node[left] {+} (a);
        \path [->, line width=0.5mm]  (d) edge node[right] {-} (e);
    \end{tikzpicture}
    }
    \centering
    \subfloat[$\graph'$]{
    \label{fig:inverse-solution}
    \begin{tikzpicture}[scale=0.55]
        \node[unanode]    (a)    at(0,2)  {\argnode{\arga}{2}{3}};
        \node[unanode]  (b)    at(0,0)  {\argnode{\argb}{8}{5}};
        \node[unanode]    (c)    at(2,0)  {\argnode{\argc}{1}{6}};
        \node[unanode]    (d)    at(1,4)  {\argnode{\argd}{1}{1}};
        \node[unanode]    (e)    at(3,2)  {\argnode{\arge}{3}{2}};
        
        \path [->, line width=0.5mm]  (a) edge node[left] {-} (b);
        \path [->, line width=0.5mm]  (a) edge node[above] {+} (c);
        \path [->, line width=0.5mm]  (e) edge node[right] {+} (c);
        \path [->, line width=0.5mm]  (d) edge node[left] {+} (a);
        \path [->, line width=0.5mm]  (d) edge node[right] {-} (e);
    \end{tikzpicture}
    }
\caption{Inverse and strong counterfactual problems ($\graph$, with desired total order $\langle \argd, \arge, \arga, \argb, \argc \rangle$, and $\graph^*$, with topic $\argc$ and desired strength $6$, respectively) and their solutions (in $\graph'$).} 
\label{fig:problems}
\end{figure}
\end{example}

To show that SXs generalise inverse problems and their solutions, we first introduce a function that assigns an arbitrary value $s \in \interval$ as the initial strength to all arguments of an inverse problem.
\begin{definition}[Initial Strength Assignment Function]
\label{def:inverse-problem-semantics}
    The initial strength assignment function $\phi_s: {\cal I} \rightarrow {\cal Q}$, with $s \in \interval$, takes an inverse problem $(\Args, \Att, \Supp, \preceq) \in \cal I$ and returns a QBAG $(\Args, \tau, \Att, \Supp) \in {\cal Q}$ s.t. $\tau = \{ (\argx, s) |\argx \in \Args \}$.
\end{definition}
We can then show that the solution of an inverse problem is also an SX, assuming an inverse problem that is augmented with an initial strength assignment function assigning arbitrary initial strengths $s$ to all arguments, and excluding initial strength assignments of arguments to $s$ from the solution.
\begin{proposition}
    For every inverse problem $I = (\Args, \Att, \Supp, \preceq)$, for every initial strength function $\is$ that is a solution of $I$, for every $s \in \interval$ it holds that 
    $\is \setminus \{\argx | \argx \in \Args, (\argx, s) \in \is \} \in SX_{f(\phi_s(I), \is),\Args}^{\fs}$.
\end{proposition}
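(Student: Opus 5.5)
The plan is to unfold the definitions, reading the statement in its intended form. Put $\graph_s := \phi_s(I) = (\Args, \is_s, \Att, \Supp)$, where $\is_s(\argx) = s$ for all $\argx \in \Args$. Let $\delta := \{(\argx, \is(\argx)) \mid \argx \in \Args, \is(\argx) \neq s\}$, i.e., $\is$ with all pairs of the form $(\argx, s)$ removed. The target is $\delta \in SX^{\fs}_{\graph_s, \Args}(\preceq)$.

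I read the subscript $f(\phi_s(I), \is)$ as denoting this base QBAG $\graph_s$, into which the solution $\is$ is being ``applied''. I also read the removed set $\{\argx \mid (\argx,s)\in\is\}$ as the corresponding set of pairs. I will say this explicitly at the start, since without these readings the expression does not typecheck.

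\textbf{Step 1: $\delta$ is a strength change of $\graph_s$.} By construction $\delta$ is a partial function $\Args \to \interval$. Whenever $\delta(\argx) \neq \bot$ we have $\delta(\argx) = \is(\argx) \neq s = \is_s(\argx)$. This is exactly the requirement of Definition~\ref{def:strength-change}. The removal of the pairs with value $s$ is precisely what makes this step go through.

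\textbf{Step 2: $\delta$ reproduces $\is$.} By Definition~\ref{def:strength-change-application}, $f(\graph_s, \delta) = (\Args, \is', \Att, \Supp)$. If $\argx \in \ddom(\delta)$, then $\is'(\argx) = \delta(\argx) = \is(\argx)$. Otherwise $\is(\argx) = s$, so $\is'(\argx) = \is_s(\argx) = s = \is(\argx)$. Hence $\is' = \is$, and $f(\graph_s,\delta)$ is exactly the QBAG $(\Args, \is, \Att, \Supp)$ underlying the solution of $I$.

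\textbf{Step 3: the SX conditions hold.} First, $\ddom(\delta) \subseteq \Args = M$ holds trivially. Second, since $\is$ solves $I$, we have $\{(\argx,\argy) \mid \argx,\argy\in\Args, \fs_{f(\graph_s,\delta)}(\argx) \le \fs_{f(\graph_s,\delta)}(\argy)\} = \preceq$. Choosing $S = \Args$ in Definition~\ref{def:fs-ordering} gives $\preceq^{\Args}_{f(\graph_s,\delta),\fs} = \preceq$, so $f(\graph_s,\delta)$ satisfies $\preceq$. Definition~\ref{def:fscx} then yields $\delta \in SX^{\fs}_{\graph_s,\Args}(\preceq)$.

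\textbf{Main obstacle.} The mathematics is routine; the only real difficulty is the notational repair in the statement (the subscript, and the set-of-arguments versus set-of-pairs removal). I would make these readings explicit first and then run Steps 1--3 as above.
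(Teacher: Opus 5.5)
Your proposal is correct and follows the paper's definition-unfolding argument. The paper's own proof concludes membership in $SX_{\phi_s(I),\Args}^{\fs}(\preceq)$ and stresses that the pairs $(\argx, s)$ must be removed, which confirms both of your notational repairs. Your Step~2 shows that applying the pruned change to $\phi_s(I)$ returns exactly $(\Args,\is,\Att,\Supp)$; this makes explicit a step the paper passes over when it goes from ``$f(\phi_s(I),\is)$ satisfies $\preceq$'' to its conclusion.
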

\begin{proof}
    By definition of an inverse problem and its solution,$f(\phi_s(I), \is)$ satisfies $\preceq$. However, for some $(\argx, s') \in \is$ it may hold that $s' = s$ and therefore $(\argx, s')$ must not occur in a strength change. Hence, $\is \setminus \{\argx | \argx \in \Args, (\argx, s) \in \is \} \in SX_{\phi_s(I),\Args}^{\fs}(\preceq)$.
\end{proof}
Similarly, we can show that strong counterfactual problems and their solutions can be reduced to SXs: a change to the initial strengths of arguments in a QBAG that leads to a desired final strength of a specific topic argument can be characterised by an SX, given we add a ``dummy argument'' to the QBAG that serves as a reference to the desired final strength of the topic argument.
Here, we assume the gradual semantics satisfies the stability principle, which we claim is a common-sense desideratum.
\begin{proposition}
    Given a gradual semantics $\fs$ satisfying stability, for every strong counterfactual problem $C = (\graph = \QBAG, \argx, s)$, for every $\is'$ that is a solution of $C$ it holds that $\is' \in SX_{\graph_\argy,\Args}^{\fs}(\preceq)$, where $\graph_\argy = (\Args \cup \{\argy\}, \is \cup \{(\argy, s)\}, \Att, \Supp)$, $\argy \not \in \Args$, and $\preceq = \{(\argx, \argy), (\argy, \argx)\}$.
\end{proposition}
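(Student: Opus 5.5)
We argue directly from the definition of an SX (Definition~\ref{def:fscx}) applied to the augmented QBAG $\graph_\argy$. The strength change is read as the restriction of $\is'$ to those arguments where it differs from $\is$. By Definition~\ref{def:strength-change}, a strength change may not assign an argument its original initial strength. So, exactly as in the proof of the preceding proposition on inverse problems, we take $\delta := \{(\argz, \is'(\argz)) \mid \argz \in \Args, \is'(\argz) \neq \is(\argz)\}$. This is a legitimate strength change of $\graph_\argy$. Its defined domain satisfies $\ddom(\delta) \subseteq \Args$, which is the required mutable set; in particular $\argy \notin \ddom(\delta)$. It is nonempty because $\is' \neq \is$. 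By Definition~\ref{def:strength-change-application}, $f(\graph_\argy, \delta) = (\Args \cup \{\argy\}, \is' \cup \{(\argy, s)\}, \Att, \Supp)$.

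Next we compute the final strengths of $\argx$ and $\argy$ in $f(\graph_\argy,\delta)$.
\begin{enumerate*}[label=(\alph*)]
\item Since $\Att, \Supp \subseteq \Args \times \Args$, the dummy $\argy$ has no attackers or supporters. Stability then gives $\fs_{f(\graph_\argy,\delta)}(\argy) = s$.
\item The dummy $\argy$ has no outgoing edges, so it cannot reach $\argx$. Removing it therefore leaves $\argx$'s final strength unchanged. We justify this by strong directionality (Lemma~\ref{lemma:strong-directionality}, under the standing assumption that $\fs$ is modular), since $f(\graph_\argy,\delta)\downarrow_{\Args} = (\Args, \is', \Att, \Supp) = \graph'$. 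Hence $\fs_{f(\graph_\argy,\delta)}(\argx) = \fs_{\graph'}(\argx) = s$, because $\is'$ solves $C$.
\end{enumerate*}
Consequently, with $S = \{\argx, \argy\}$, the final strength ordering $\preceq^S_{f(\graph_\argy,\delta),\fs}$ contains $(\argx,\argy)$ and $(\argy,\argx)$. It also contains the reflexive pairs $(\argx,\argx)$ and $(\argy,\argy)$.

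The main obstacle is this reflexivity mismatch. Definition~\ref{def:fs-ordering} always yields a reflexive relation, whereas the stated $\preceq$ omits the reflexive pairs. We would handle this by reading $\preceq$ as a preorder, and hence reflexive, as the paper does when it introduces desired orderings in Section~\ref{sec:explanations}. Under that reading, $\preceq = \preceq^S_{f(\graph_\argy,\delta),\fs}$, so $f(\graph_\argy,\delta)$ satisfies $\preceq$ and $\delta \in SX^{\fs}_{\graph_\argy,\Args}(\preceq)$.

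The remaining point is to identify $\delta$ with $\is'$, i.e.\ to read $\is'$ ``as a strength change''. We would note this identification explicitly, mirroring the inverse-problem proposition. If modularity is not to be assumed, we would instead require directionality-type independence of $\argx$ from an unconnected argument as an explicit hypothesis.
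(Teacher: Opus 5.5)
Your proposal is correct and follows the same route as the paper's proof: stability fixes the dummy argument $\argy$ at $s$, the counterfactual solution fixes $\argx$ at $s$, and so the updated QBAG $f(\graph_\argy,\cdot)$ realises the two-element ordering. Your version is more careful than the paper's in three places. The paper silently equates $\fs_{\graph'_\argy}(\argx)$ with $\fs_{\graph'}(\argx)$; stability alone does not give this, and you correctly justify it via strong directionality under the modularity assumption. The paper also glosses over restricting $\is'$ to the arguments where it differs from $\is$, and over the reflexivity of $\preceq$, both of which you make explicit.
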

\begin{proof}
    Because $\fs$ satisfies stability and $\argy \not \in \Args$, for $\graph'_\argy = (\Args \cup \{\argy\}, \is' \cup \{(\argy, s)\}, \Att, \Supp)$ it must hold that $\fs_{\graph_\argy}(\argy) = \fs_{\graph'_\argy}(\argy)  = s$ (note that $\argy$ has neither attackers nor supporters).
    This means by definition of a strong counterfactual problem and its solution, we must have $\fs_{\graph'_\argy}(\argx) = \fs_{\graph'_\argy}(\argy)$. It follows that because $\graph'_\argy = f(\graph_\argy, \is')$, it holds that $\is' \in SX_{\graph_\argy,\Args}^{\fs}(\preceq)$, with $\preceq = \{(\argx, \argy), (\argy, \argx)\}$, as achieved by $\fs_{\graph'_\argy}$. 
\end{proof}
Let us expand on Example~\ref{ex:problems} to give an intuition of the results.
\begin{example}
\label{ex:sc-inverse-counterfactual}
Consider the previous inverse and strong counterfactual problems $I = (\Args, \Att, \Supp, \preceq)$ and $C = (\graph^* = (\Args, \is^*, \Att, \Supp), \argc, 6)$, respectively (cf. Figure~\ref{fig:problems}), as well as their solution $\is'$. We observe that:
\begin{itemize}
    \item Given $\graph_0 = (\Args, \{(\argx, 0)\mid\argx \in \Args \}, \Att, \Supp)$ it holds that $\is' \in SX_{\graph_0,\Args}^{\fs}(\preceq)$;
    \item Given $\graph_\argy = (\Args \cup \{\argy\}, \is_\argy = \is \cup \{(\argy, 6)\}, \Att, \Supp)$ it holds that $\is' \setminus \is^* \in SX_{\graph_\argy,\Args}^{\fs}(\{(\argc, \argy), (\argy, \argc)\})$.
\end{itemize}
\end{example}
%

\section{Conclusions}
\label{sec:conclusions}
We have introduced argumentative strength change explanations, as a potential foundation for argumentative XAI and contestable AI.
Our explanations generalise solutions of previously studied problems in gradual argumentation.
We have demonstrated some (non)existence results, as well as the empirical feasibility of finding explanations in relatively small, layered QBAGs, with some expected limitations.
Future research may expand our investigations regarding theoretical existence and empirical find-ability of our strength change explanations, especially in large QBAGs, measure other characteristics of the explanations, such as simplicity and robustness, and apply the explanations to real-world contestability problems and datasets.



\begin{acks}
This work was partially supported by the Wallenberg AI, Autonomous Systems and
Software Program (WASP) funded by the Knut and Alice Wallenberg Foundation.
\end{acks}



\bibliographystyle{ACM-Reference-Format} 
\bibliography{sample}


\newpage
\twocolumn[\section*{Appendix: Strength Change Explanations in Quantitative Argumentation}]
\subsection*{Appendix 1: Proofs}
The appendix re-states all lemmata and provides their proofs.

\lemmaemptyisx*
\begin{proof} \phantom{e\\}
\textbf{``If'' direction:} $\forall M \subseteq \Args$ it holds that if $\graph$ satisfies $\preceq$ w.r.t. $\fs$ then $\delta_\emptyset \in SX_{\graph, M}^\fs(\preceq)$.
    By definition of a strength change application (Definition~\ref{def:strength-change-application}) it holds that $f(\graph, \delta_\emptyset) = \graph$.
    Hence, if $\graph$ satisfies $\preceq$ w.r.t. $\fs$ then it must also hold that $f(\graph, \delta_\emptyset)$ satisfies $\preceq$ w.r.t. $\fs$, which implies $\delta_\emptyset \in SX_{\graph, M}^\fs(\preceq)$, by Definition~\ref{def:fscx}.
    
\textbf{``Only if'' direction:} $\forall M \subseteq \Args$ it holds that if $\graph$ does not satisfy $\preceq$ w.r.t. $\fs$ then $\delta_\emptyset \not \in SX_\graph^\fs(\preceq)$.
Because $f(\graph,\delta_\emptyset) = \graph$ it holds that if $\graph$ does not satisfy $\preceq$ w.r.t. $\fs$, $f(\graph,\delta_\emptyset)$ does not satisfy $\preceq$ w.r.t. $\fs$, either, and thus cannot be an SX.
\end{proof}

\lemmaemptyisoptimal*
\begin{proof}
  From Lemma~\ref{lemma:empty-is-x} it follows directly that if $\graph$ satisfies $\preceq$ w.r.t. $\fs$ then $\delta_\emptyset \in SX_\graph^\fs(\preceq, 0)$.
  Now, it remains to be shown that:
    \begin{enumerate}[label=(\roman*)]
        \item $\nexists \delta_\graph \in SX^\fs_{\graph, M}(\preceq)$ s.t. $\|\delta_\graph \| < \|\delta_\emptyset\|$ ($\delta_\emptyset$ is optimal);
        \item $\forall \delta_\graph \in SX_\graph^\fs(\preceq) \text{ s.t. } \delta_\graph \neq \delta_\emptyset$ it holds that $\|\delta_\graph \| > \|\delta_\emptyset\|$  (no other SX is optimal).
    \end{enumerate}
  Clearly \emph{i)} and \emph{ii)} hold because for every $\delta_\graph \in SX_\graph^\fs(\preceq)$ s.t. $\delta_\graph \neq \delta_\emptyset$ it holds that $\exists \argx \in \ddom(\delta_\graph)$ s.t. $|\delta_\graph(\argx) - \tau(\argx)| > 0$ (thus $\|\delta_\graph\| > 0$)   but $\nexists \argy \in \ddom(\delta_\emptyset)$ s.t. $|\delta_\emptyset(\argy) - \tau(\argy)| > 0$ (thus $\|\delta_\emptyset\| = 0$); hence $\|\delta_\graph \| > \|\delta_\emptyset\|$.
\end{proof}

\lemmastrongdirectionality*
\begin{proof}
The claim follows from observing that, under modular semantics, the final strength of an argument only depends on the initial strength of this argument and on the final strengths of its attackers and supporters. 
\end{proof}

\lemmacomplexity*
\begin{proof}
We first analyse the time complexity of computing the strength values of arguments. For acyclic QBAGs, these values can be computed in linear time $\mathcal{O}(N)$~\cite[Proposition 3.1]{Potyka:2019}. Since the strength values must be recomputed for each base score perturbation and there are $|M|$ mutable arguments to be perturbed, the time complexity for this step is $\mathcal{O}(|M| \cdot N)$.
Next, the time complexity of computing the cost function is $n^2$, as it involves pairwise comparisons among the $n$ topic arguments.
Finally, since the algorithm requires at most $K$ iterations, the overall time complexity of Algorithm~\ref{algo_heuristic_search} is $\mathcal{O}(K \cdot (|M| \cdot N + n^2))$.
\end{proof}
\newpage
\subsection*{Appendix 2: Full Experimental Results}
Table~\ref{tab_results_complete} contains all experimental results evaluating the heuristic search.
The experiments were run on a machine featuring an Apple M4 with 10 cores and 24 GB RAM.
The code for our experiments is available at \url{https://github.com/nicopotyka/Uncertainpy/blob/master/examples/gradual/strength_change_explanations.ipynb}.

In addition to the results in the main paper, we present results for an experimental setting where both first and intermediate layers are mutable.
Note that the results can be interpreted as an exception to the trend that adding more layers of arguments to the mutable set yields better results.
Indeed, the results are very similar to the setting where only the intermediate layers are mutable, presumably because the expansion of the search space to the first layer has little effect.
Accordingly, we consider the results as unsurprising and not contradictory to the bigger picture.

Also, additional experimental results for EB and QE semantics are provided.
The results for these semantics are broadly speaking similar to the ones for DFQuAD semantics, with close to perfect performance when a solution can be guaranteed and mixed results when it cannot.
Notable differences can be observed for some of the settings that fall into the latter class:
\begin{enumerate*}[label=(\roman*)]
    \item The search performs somewhat better for EB and (even more so for) QE semantics given only the first layer is mutable and there are only three topic arguments.
    \item The search performs worse for EB and QE semantics when the intermediate layer is mutable (no matter whether the first layer is mutable or not), except for the smallest QBAG structure ($[8,32,16,3]$).
\end{enumerate*}
These difference indicate that more comprehensive experiments may be interesting future work.

\begin{table*}[t]
\centering
\caption{Heuristic search results, with additional experiment where both first and intermediate layers are mutable, and for DFQuAD, as well as EB and QE semantics; average validity, Kendall correlation, Spearman correlation, runtime (in seconds), and absolute base score difference (per argument) over 100 MLP-like QBAGs with varying structures.}
\footnotesize{
\begin{tabular}{llccccc}
\toprule
\textbf{Structure} & \textbf{Metric} 
& \makecell{\textbf{Constrained} \\ \textbf{$L_{n-1}$ fixed}} 
& \makecell{\textbf{First} \\ \textbf{mutable}} 
& \makecell{\textbf{Interm.} \\ \textbf{mutable}} 
& \makecell{\textbf{First+Interm.} \\ \textbf{mutable}} 
& \makecell{\textbf{All} \\ \textbf{mutable}} \\
\midrule
\multicolumn{7}{c}{\textbf{DFQuAD Semantics}} \\
\midrule
\multirow{5}{*}{\textbf{[8,32,16,3]}} 
& Validity        & \textbf{100\%} & 0\%      & 83\%   & 82\% 	 & \textbf{100\%} \\
& Kendall         & 1.00 		   & -0.24    & 0.78   & 0.82    & 1.00 \\
& Spearman        & 1.00 		   & -0.24    & 0.78   & 0.82    & 1.00 \\
& Runtime         & 0.03 		   & 1.09     & 0.28   & 0.30    & 0.03 \\
& $|\Delta$ BS$|$ & 0.01 		   & NA       & 0.30   & 0.27    & 0.15 \\

\midrule
\multirow{5}{*}{\textbf{[8,32,16,8]}} 
& Validity        & \textbf{100\%} & 0\%      & 33\%    & 32\%   & \textbf{100\%} \\
& Kendall   	  & 1.00 		   & -0.02    & 0.62    & 0.57   & 1.00 \\
& Spearman  	  & 1.00 		   & -0.03    & 0.68    & 0.62   & 1.00 \\
& Runtime   	  & 0.11 		   & 7.46     & 1.01    & 1.04   & 0.11 \\
& $|\Delta$ BS$|$ & 0.06 		   & NA       & 0.39    & 0.34   & 0.27 \\

\midrule
\multirow{5}{*}{\textbf{[8,64,16,8,3]}} 
& Validity        & \textbf{100\%} & 3\%      & 87\%   & 86\%   & \textbf{100\%} \\
& Kendall    	  & 1.00 		   & -0.19    & 0.89   & 0.85   & 1.00 \\
& Spearman  	  & 1.00 		   & -0.20    & 0.90   & 0.85   & 1.00 \\
& Runtime         & 0.08 		   & 3.57     & 0.70   & 0.80   & 0.08 \\
& $|\Delta$ BS$|$ & 0.01 		   & $\sim0$  & 0.08   & 0.07   & 0.04 \\
\midrule
\multirow{5}{*}{\textbf{[8,64,16,8,8]}} 
& Validity        & \textbf{100\%} & 0\%      & 24\%   & 19\%   & 99\% \\
& Kendall         & 1.00 		   & 0.02 	  & 0.54   & 0.48 	& 0.99 \\
& Spearman        & 1.00 		   & 0.03 	  & 0.61   & 0.54 	& 0.99 \\
& Runtime         & 0.34 		   & 3.99 	  & 3.19   & 3.37 	& 0.40 \\
& $|\Delta$ BS$|$ & 0.03 		   & NA  	  & 0.12   & 0.12   & 0.10 \\
\midrule
\multicolumn{7}{c}{\textbf{EB Semantics}} \\
\midrule
\multirow{5}{*}{\textbf{[8,32,16,3]}} 
& Validity        & \textbf{100\%} & 14\%     & 88\%   & 90\% 	 & \textbf{100\%} \\
& Kendall         & 1.00 		   & -0.05    & 0.85   & 0.89    & 1.00 \\
& Spearman        & 1.00 		   & -0.06    & 0.84   & 0.89    & 1.00 \\
& Runtime         & 0.03 		   & 0.91     & 0.21   & 0.21    & 0.02 \\
& $|\Delta$ BS$|$ & 0.02 		   & 0.10     & 0.20   & 0.18    & 0.13 \\

\midrule
\multirow{5}{*}{\textbf{[8,32,16,8]}} 
& Validity        & \textbf{100\%} & 0\%      & 16\%    & 16\%   & 98\% \\
& Kendall   	  & 1.00 		   & 0.03     & 0.49    & 0.53   & 0.99 \\
& Spearman  	  & 1.00 		   & 0.04     & 0.57    & 0.59   & 1.00 \\
& Runtime   	  & 0.13 		   & 9.76     & 10.63   & 11.67  & 0.21 \\
& $|\Delta$ BS$|$ & 0.06 		   & NA       & 0.35    & 0.33   & 0.29 \\

\midrule
\multirow{5}{*}{\textbf{[8,64,16,8,3]}} 
& Validity        & \textbf{100\%} & 13\%     & 64\%   & 61\%   & \textbf{100\%} \\
& Kendall    	  & 1.00 		   & -0.09    & 0.59   & 0.56   & 1.00 \\
& Spearman  	  & 1.00 		   & -0.12    & 0.60   & 0.57   & 1.00 \\
& Runtime         & 0.09 		   & 21.54    & 13.89  & 14.74  & 0.08 \\
& $|\Delta$ BS$|$ & 0.01 		   & 0.04     & 0.19   & 0.17   & 0.14 \\
\midrule
\multirow{5}{*}{\textbf{[8,64,16,8,8]}} 
& Validity        & \textbf{100\%} & 0\%      & 3\%    & 4\%    & \textbf{100\%} \\
& Kendall         & 1.00 		   & 0.02 	  & 0.25   & 0.27 	& 1.00 \\
& Spearman        & 1.00 		   & 0.02 	  & 0.31   & 0.35 	& 1.00 \\
& Runtime         & 0.34 		   & 25.04 	  & 26.29  & 36.07 	& 9.51 \\
& $|\Delta$ BS$|$ & 0.03 		   & NA  	  & 0.29   & 0.29   & 0.27 \\
\midrule
\multicolumn{7}{c}{\textbf{QE Semantics}} \\
\midrule
\multirow{5}{*}{\textbf{[8,32,16,3]}} 
& Validity        & \textbf{100\%} & 35\%     & 82\%   & 93\% 	 & \textbf{100\%} \\
& Kendall         & 1.00 		   & 0.25     & 0.85   & 0.92    & 1.00 \\
& Spearman        & 1.00 		   & 0.26     & 0.87   & 0.92    & 1.00 \\
& Runtime         & 0.03 		   & 12.62    & 0.36   & 0.22    & 0.05 \\
& $|\Delta$ BS$|$ & 0.05 		   & 0.16     & 0.23   & 0.22    & 0.18 \\

\midrule
\multirow{5}{*}{\textbf{[8,32,16,8]}} 
& Validity        & 99\% 		   & 0\%      & 22\%    & 25\%   & 98\% \\
& Kendall   	  & 1.00 		   & 0.21     & 0.62    & 0.65   & 1.00 \\
& Spearman  	  & 1.00 		   & 0.27     & 0.69    & 0.71   & 1.00 \\
& Runtime   	  & 0.17 		   & 11.79    & 2.73    & 1.28   & 0.36 \\
& $|\Delta$ BS$|$ & 0.15 		   & NA       & 0.38    & 0.36   & 0.35 \\

\midrule
\multirow{5}{*}{\textbf{[8,64,16,8,3]}} 
& Validity        & \textbf{100\%} & 30\%     & 71\%   & 78\%   & \textbf{100\%} \\
& Kendall    	  & 1.00 		   & 0.24     & 0.67   & 0.72   & 1.00 \\
& Spearman  	  & 1.00 		   & 0.26     & 0.67   & 0.74   & 1.00 \\
& Runtime         & 0.09 		   & 2.72     & 14.45  & 1.13   & 0.12 \\
& $|\Delta$ BS$|$ & 0.01 		   & 0.09  	  & 0.20   & 0.18   & 0.15 \\
\midrule
\multirow{5}{*}{\textbf{[8,64,16,8,8]}} 
& Validity        & \textbf{100\%} & 1\%      & 4\%    & 7\%    & 97\% \\
& Kendall         & 1.00 		   & 0.13 	  & 0.42   & 0.46 	& 0.99 \\
& Spearman        & 1.00 		   & 0.17 	  & 0.50   & 0.55 	& 1.00 \\
& Runtime         & 2.00 		   & 22.78 	  & 24.08  & 4.19 	& 0.82 \\
& $|\Delta$ BS$|$ & 0.04 		   & 0.38  	  & 0.28   & 0.36   & 0.30 \\
\bottomrule
\end{tabular}}
\label{tab_results_complete}
\end{table*}
\end{document}